\newtheorem{proposition}{Proposition}
\newtheorem{lemma}{Lemma}
\newcommand{\dd}[1]{\mathrm{d}#1}
\begin{document}

\title{General solution to Euler-Poisson equations of a free symmetric body by direct summation of power series}

\author{Guilherme Corrêa Silva}
\email{guilherme.jfa@hotmail.com} \affiliation{Depto. de F\'isica, ICE, Universidade Federal de Juiz de Fora, MG,
Brazil}


\date{\today}
\begin{abstract}
Euler-Poisson equations describe the temporal evolution of a rigid body's orientation through the rotation matrix and angular velocity components, governed by first-order differential equations. According to the Cauchy-Kovalevskaya theorem, these equations can be solved by expressing their solutions as power series in the evolution parameter. In this work, we derive the sum of these series for the case of a free symmetric rigid body. By using the integrals of motion and directly summing the terms of these series, we obtain the general solution to the Euler-Poisson equations for a free symmetric body in terms of elementary functions. This method circumvents the need for standard parametrizations like Euler angles, allowing for a direct, closed-form solution. The results are consistent with previous studies, offering a new perspective on solving the Euler-Poisson equations.
\end{abstract}

\maketitle

\section{Introduction}

According to Euler's rotation theorem  \cite{euler1776formulae, arnol2013mathematical, Deriglazov_2023, goldstein2011classical, Landau1976Mechanics, palais2009disorienting}, the temporal evolution of a point ${\bf y}(t)$ of the freely moving rigid body can be presented as follows
\begin{eqnarray}\label{eul.1}
y^i(t) = y_c^i+v_c^i t + R_{ij}(t)x^j(0). 
\end{eqnarray}
In this expression, the term $y_c^i+v_c^i t$ describes the rectilinear motion of the center of mass, $R_{ij}(t)$ is an orthogonal matrix, and $x^j(0)$ are coordinates of the point relative to the center-of-mass at $t=0$. Euler's theorem thereby reduces the problem of describing the motion of a body to searching for the time dependence of the rotation matrix $R_{ij}(t)$. The latter contains all information on the evolution of the body in the Laboratory (fixed in space) frame, in which the body is observed. 

Temporal evolution of the rotation matrix can be obtained from Euler-Poisson equations\footnote{We use the notation 
from work \cite{Deriglazov_2023}. In particular, by $({\bf a}, {\bf b}) = a_ib_i$ and $[{\bf a}, {\bf b}]_i=\epsilon_{ijk}a_j b_k$ we denote the scalar and vector products of the vectors ${\bf a}$ and ${\bf b}$. $\epsilon_{ijk}$ is Levi-Civita symbol in three dimensions, with $\epsilon_{123}=1$. A detailed derivation of the equations (\ref{euler}) and (\ref{poisson}) from the Lagrangian action functional is also given 
in \cite{Deriglazov_2023}.}
\begin{gather}
I\dot{\boldsymbol{\Omega}} = [I\boldsymbol{\Omega}, \boldsymbol{\Omega}],\label{euler}\\
\dot{R}_{ij} = -\epsilon_{jkm}\Omega_kR_{im},\label{poisson}
\end{gather}
where $I$ is the inertia tensor, and the dynamical variables $\Omega_i(t)$ turn out to be components of angular velocity in the body-fixed frame. The formula (\ref{eul.1}) implies \cite{Deriglazov_2023}, that the Euler-Poisson equations should be solved with universal initial data for the rotation matrix: $R_{ij}(0)=\delta_{ij}$. Solutions with other initial conditions do not describe the rigid body motions. The initial data for $\Omega_i(t)$ can be any three numbers: $\Omega_i(0)=\Omega'_i=$\, const. They determine the initial angular velocity of the body. 

In several recent studies, the dynamics of rotating rigid bodies under the influence of both external and internal forces are explored through various physical systems and conditions \cite{amer1, amer2, amer3, amer4, amer5, amer6, amer7}. These systems include bodies subjected to gyrostatic torques, electromagnetic fields, constant and time-varying external forces, resistive effects from viscous media and with a cavity filled with viscous fluid. Analytical solutions for key parameters such as angular velocities, Euler angles, and stability criteria are derived, with phase diagrams and numerical simulations used to assess the motion and stability of these bodies. In addition, the impact of applied forces on the equilibrium and periodicity of the body's motion is addressed, with particular relevance for practical applications in mechanical systems, spacecraft, and satellite technology. These investigations provide valuable insights into the behavior of rigid bodies in rotational motion, contributing to advancements in fields like aerospace, mechanical engineering, and astrophysics.

In this work, we consider a symmetrical body with the principal inertia moments $I_1=I_2\neq I_3$. It is known that in this case, the most general movement of the body is a regular precession: a symmetrical body rotates uniformly around the third axis of inertia while this axis precesses with uniform angular velocity around the axis of conserved angular momentum. There are various possibilities to arrive at this result. 
The traditional way is to solve the equations (\ref{euler}) and (\ref{poisson}) by rewriting them through the Euler angles, and in the Laboratory frame with a third axis directed along the vector of conserved angular momentum \cite{Landau1976Mechanics}. 
Some specific features of rigid body dynamics, that must be taken into account within this method, are discussed in recent works \cite{deriglazov2024asymmetricalbodyexampleanalytical, deriglazov2023problem}.

Another possibility was presented in \cite{deriglazov2023general}, where the explicit form of the rotation matrix through elementary functions was obtained by resolution of equations (\ref{euler}) and (\ref{poisson}) without assuming any kind of parametrization like Euler angles. This was achieved by reducing the original problem to the problem of the motion of a one-dimensional harmonic oscillator under the action of a constant external force. This method also allows one to find particular solutions in elementary functions in several more complex problems, including the cases of dancing top \cite{Deriglazov_2023EP}, Lagrange top \cite{deriglazov2024asymmetricalbodyexampleanalytical} and free symmetric body in stationary and homogeneous electric and magnetic fields \cite{Deriglazov_2024}. 

In the present work, we explore one more possibility based on a remarkable formula, which, in our opinion, is unfairly forgotten and ignored in studies on rigid body dynamics. Euler-Poisson equations belong to the following class of autonomous differential equations 
\begin{equation}\label{sistint}
    \dot{z}^i = h^i(z^j), \qquad i, j= 1, 2,  \ldots p, 
\end{equation}
for determining integral lines $z^i(t)$ of given  vector field $h^i(z^j)$.  It is known \cite{deriglazov2016classical, gantumur2011math, kepley2021constructive, thelwell2012cauchy, folland2020introduction, evans2022partial, deriglazov2024dynamics} that the following family of functions: 
\begin{equation}\label{oprint}
 z^i(t, z^j_0) = e^{th^k(z_0^j)\frac {\partial}{\partial z^k_0}}z^i_0,\ \ \ \ \ \ \ \text{where}\ \ \ \ \ \ e^{th^k(z_0^j)\frac {\partial}{\partial z^k_0}} = \sum_{n=0}^\infty \frac{t^n}{n!}\bigg(h^k(z_0^j)\frac {\partial}{\partial z^k_0}\bigg)^n,
\end{equation}
parameterized by $n$ parameters $z^i_0$, represents their general solution.  This is an immediate consequence of the following  properties of the differential operator $e^{th^k(z_0^j)\frac {\partial}{\partial z^k_0}}$:
\begin{equation}\label{prop1}
    e^{th^k\frac{\partial}{\partial z_0^k}}f(z_0^i) = f(e^{th^k\frac{\partial}{\partial z_0^k}}z_0^i) = f(z^i (t, z_0^j)),
\end{equation}
\begin{equation}\label{prop2}
    \dot{z}^i(t, z_0^j) = \frac{\dd}{\dd t}\big(e^{th^k\frac {\partial}{\partial z^k_0}}z^i_0\big) = e^{th^k\frac{\partial}{\partial z_0^k}}\bigg[\bigg(h^k(z_0^j)\frac{\partial}{\partial z_0^k}\bigg)z_0^i\bigg] = e^{th^k\frac{\partial}{\partial z_0^k}}h^i(z_0^j),
\end{equation}
where $f(z_0^i)$ is an analytic function. Besides, the Cauchy-Kovalevskaya theorem \cite{folland2020introduction, gantumur2011math, kepley2021constructive, thelwell2012cauchy, evans2022partial} guarantees the convergence of the series (\ref{oprint}) in some vicinity of $t=0$. 

So, it is not necessary to directly solve the system (\ref{euler}) and (\ref{poisson}) in one or another way. Instead, we can calculate the explicit form of the series terms (\ref{oprint}) and then try to sum them. As we will show below, for the case of a free symmetric body this turns out to be possible, leading to its rotation matrix in terms of elementary functions. 

The work is organized as follows. In Sect. 2, after presenting our notation and basic equations, we sum up the power series for $\Omega_i(t)$, thereby obtaining the general solution to Euler equations (\ref{euler}) in elementary functions. The summation of the series for $R_{ij}(t)$ of Poisson equations (\ref{poisson}) turns out to be a much more involved task. In Sect. 3 we solve this task in a somewhat trick way, by making use of integrals of motion to represent the components $R_{i1}$ and $R_{i2}$ through the previously found $R_{i3}$.  The direct summation of all components in an independent manner will be done in Sect. 4. Both methods lead to the same final expression for the rotation matrix, coinciding with that obtained in [9]. In the Appendix, for the convenience of the reader, we proved the formulas (\ref{oprint})-(\ref{prop2}).

\section{Notation and basic equations of the problem}\label{1}

Before discussing those solutions to the Euler-Poisson equations that describe the motions of a rigid body, we use the formula (\ref{oprint}) to obtain their general solution with arbitrary initial data. 

Using columns of the matrix $R^T$: $\textbf{G}_i(t) =(R_{i1}, R_{i2}, R_{i3})^T$, the Poisson equations (\ref{poisson}) can be written in the vector-like form: $\dot{\textbf{G}}_i = [\textbf{G}_i, \boldsymbol{\Omega}]$. Let us consider the following Cauchy problem: 
\begin{gather}
    \dot{\boldsymbol{\Omega}} = I^{-1}[I\boldsymbol{\Omega}, \boldsymbol{\Omega}]\label{euler2},\ \ \ \ \ \ \ \ \ \Omega_i(0) = \Omega_i',\\
    \dot{\textbf{G}}_i = [\textbf{G}_i, \boldsymbol{\Omega}],\ \ \ \ \ \ \ \ \ R_{ij}(0) = R_{ij}',\label{poisson2}
\end{gather}
where the initial data  $\Omega_i'$ and $R_{ij}'$ are arbitrary numbers, and $I$ is a diagonal matrix of the following form: $I=diagonal (I_2, I_2, I_3)$.  Solutions to this problem with the data $R_{ij}'=\delta_{ij}$ will describe all possible motions of a symmetrical body that at the initial instant $t=0$ had its inertia axes directed along the laboratory (fixed in space) axes: ${\bf R}_i(0)={\bf e}_i$, see \cite{Deriglazov_2023} for details.

Euler-Poisson equations admit several integrals of motion. They are 
\begin{gather} 
\Omega_3(t)=\Omega'_3=const,\label{ad1}\\
\frac 1 2 \sum_{i = 1}^3 I_i\Omega_i^2=E = const,\label{energia}\\
(RI\boldsymbol{\Omega})_i = \sum_{j=1}^3 I_jR_{ij}\Omega_j =m_i= const.\label{momento}
\end{gather}
For the case of rigid-body motions, $E$ and $m_i$ represent the rotational energy and components of angular momentum, respectively. Besides, taking the equality (\ref{momento}) at $t=0$ and using $R_{ij}(0) = \delta_{ij}$, we get the relation between conserved angular momentum and initial values of angular velocity: $\Omega_i(0) = m_i/I_i$. Using this in (\ref{energia}), we conclude that energy along any trajectory is fixed by the angular momentum:  $E = \frac 1 2 \sum_i m_i^2/I_i$.

Equations (\ref{euler2}) and (\ref{poisson2}) form an autonomous  system of $3+9$ nonlinear first-order differential equations. Their right sides are polynomials and hence represent analytical functions. According to the formula (\ref{oprint}), the unique solution to our initial value problem is given by the series
\begin{gather}
{\boldsymbol{\Omega}}(t, \Omega_j') = \exp{\bigg[{t\bigg({[\textbf{G}_a', \boldsymbol{\Omega}']_b\frac{\partial}{\partial {R'}_{ab}}+ (I^{-1}[I\boldsymbol{\Omega}', \boldsymbol{\Omega}'])_c \frac{\partial}{\partial \Omega_c'}}\bigg)}\bigg]} \boldsymbol{\Omega}'=\exp{\bigg[{t{(I^{-1}[I\boldsymbol{\Omega}', \boldsymbol{\Omega}'])_c \frac{\partial}{\partial \Omega_c'}}}\bigg]} \boldsymbol{\Omega}',\label{omega}\\
{\textbf{G}}_i(t, \Omega_j', R_{kl}') = \exp{\bigg[{t\bigg({[\textbf{G}_a', \boldsymbol{\Omega}']_b\frac{\partial}{\partial {R'}_{ab}}+ (I^{-1}[I\boldsymbol{\Omega}', \boldsymbol{\Omega}'])_c \frac{\partial}{\partial \Omega_c'}}\bigg)}\bigg]}\textbf{G}_i'.\label{alexei}    
\end{gather}
Our goal is to sum these series and try to write them in elementary functions. We start with the analysis of Euler equations. \par

\noindent\textbf{General solution to the Euler equations.} The first term of the power series (\ref{omega}) is just $\boldsymbol{\Omega}'$.
The next term we present as follows:
\begin{equation}\label{1omega}
    \bigg[(I^{-1}[I\boldsymbol{\Omega}', \boldsymbol{\Omega}'])_c\frac{\partial}{\partial \Omega_c'}\bigg]\boldsymbol{\Omega}' = \phi'\begin{pmatrix}
        \Omega_2'\\
        -\Omega_1'\\
        0
    \end{pmatrix}
    = \phi' T_3\boldsymbol{\Omega}',\ \ \ \ \text{where}\ \ \ \  \phi' = (I_2-I_3)\Omega_3'/I_2,\ \ \ \ \text{and}\ \ \ \
    T_3 =
    \begin{pmatrix}
    0 & 1 & 0 \\
   -1 & 0 & 0 \\
    0 & 0 & 0 
\end{pmatrix}
    .
\end{equation}
Above we have three equations and the most interesting of them is the one in the third component. This means that the component $\Omega_3'$ belongs to the kernel of the linear differential operator $(I^{-1}[I\boldsymbol{\Omega}', \boldsymbol{\Omega}'])_c\frac{\partial}{\partial \Omega_c'}$. As a consequence of using the formula (\ref{oprint}), functions of elements of the kernel of the vector field $h^k$ also belongs to its kernel, proposition 3 in the appendix. That is, since functions $f(\Omega_3')$ belongs to the kernel, consequently we have $\big[(I^{-1}[I\boldsymbol{\Omega}', \boldsymbol{\Omega}'])_c\frac{\partial}{\partial \Omega_c'}\big]\phi' = 0$. Then, the next terms become immediate: for $n=2$ we have $\big[(I^{-1}[I\boldsymbol{\Omega}', \boldsymbol{\Omega}'])_c\frac{\partial}{\partial \Omega_c'}\big]^2\boldsymbol{\Omega}' = \big[(I^{-1}[I\boldsymbol{\Omega}', \boldsymbol{\Omega}'])_c\frac{\partial}{\partial \Omega_c'}\big]\big(\phi'T_3\boldsymbol{\Omega}'\big) = \big(\phi'T_3\big)^2\boldsymbol{\Omega}'$, and so on. Performing mathematical induction, the series (\ref{omega}) is rewritten as\footnote{Previously it was mentioned: if the system were linear, the solution would be immediate. That is the solution. More specifically, with the condition $I_1=I_2$, the Euler equations (\ref{euler}) turn out to be an autonomous and linear system with constant coefficients.  This class of systems has an immediate general solution. The interesting thing is that we could arrive at this result from the formula (\ref{oprint}).}

\begin{equation}\label{eulersol}
    {\boldsymbol{\Omega}}(t, \Omega_j') = e^{t\phi'T_3}\boldsymbol{\Omega}' = \bigg(\sum_{n=0}^\infty \frac{t^n}{n!}{\phi'}^nT_3^n\bigg)\boldsymbol{\Omega}'.
\end{equation}
Moreover, the even and odd powers of the matrix $T_3$ satisfies the relations $T_3^{2n} = (-1)^n diag(1, 1, 0)$ (except by $n=0$) and $T_3^{2n+1} = (-1)^{n}T_3$, respectively. Then, considering this in the above sum, we obtain

\begin{equation}\label{domemgan1}
    {\boldsymbol{\Omega}}(t, \Omega_j')=
    \begin{pmatrix}
    \cos{\phi't} & \sin{\phi't} & 0 \\
    -\sin{\phi't} & \cos{\phi't} & 0 \\
    0 & 0 & 1 
    \end{pmatrix}
    \boldsymbol{\Omega}'=
    \begin{pmatrix}
    \Omega_1'\cos\phi' t + \Omega_2'\sin\phi' t \\
    -\Omega_1'\sin\phi' t +\Omega_2'\cos\phi' t \\
     \Omega_3' 
    \end{pmatrix},
\end{equation}
the known general solution to the Euler equations for the case of $I_1=I_2$ in elementary functions. That also represents the general solution to the Euler equations in the case of a free symmetric rigid body, since the unique condition that separates the two situations (of all solutions to the system and the solutions that describes a rigid body) is the data $R_{ij}(0) = \delta_{ij}$. Then, the angular velocity $\boldsymbol{\Omega}(t, \Omega_j')$ rotates around the third axis of the body-fixed frame clockwise with frequency $\phi'$.

\section{General solution to Poisson equations by making use of integrals of motion}\label{1.1}

The energy (\ref{energia}) and angular momentum (\ref{momento}) are integrals of motion of the system (\ref{euler2}), (\ref{poisson2}), even though now we are not considering a physical system or even the condition $R_{ij}(0) = \delta_{ij}$. These integrals are found by physical laws however, they continue being constant relative to $t$ considering the equations (\ref{euler2}), (\ref{poisson2}), without thinking about any physical interpretation. These equations at $t=0$ are read as

\begin{equation}\label{im}
    2E = I_2({\Omega_1'}^2 + {\Omega_2'}^2) + I_3{\Omega_3'}^2, \ \ \ \ \ \ \ \ \ m_i = I_2(\Omega_1'R_{i1}' + \Omega_2'R_{i2}') + I_3\Omega_3'R_{i3}' =  (I\boldsymbol{\Omega}', \textbf{G}_i').
\end{equation}
By direct computation, we obtain\footnote{In sections \ref{1.1} and \ref{2.1} we fixed the notation $\nabla \equiv [{\textbf{G}_a', \boldsymbol{\Omega}']_b\frac{\partial}{\partial {R'}_{ab}}+ (I^{-1}[I\boldsymbol{\Omega}', \boldsymbol{\Omega}'])_c \frac{\partial}{\partial \Omega_c'}}$ to simplify our equations.} $\nabla{E} = 0$ and $\nabla{m}_i = 0$, which endorses a relation between integrals of motion and the kernel of $\nabla$. This relationship also is exploited by proposition 3 in the appendix. Besides, recall that functions of the elements in the kernel as $f(E, m_i)$ or $g(\Omega_3')$ satisfies $\nabla f(E, m_i) = 0 = \nabla g(\Omega_3')$. These facts about integrals of motion will be important to explain the following lemma:

\begin{lemma}\label{l1} Given a numerical matrix $A = diag(A_2, A_2, A_3)$, then the linear differential operator $\nabla$ obeys the relations:

\begin{gather}
    \nabla^{2n}(A\boldsymbol{\Omega}', \textbf{G}_i') = \bigg(A_3 - \frac{I_3}{I_2}A_2\bigg)\bigg[-(-1)^{n}{k'}^{2n}\frac{(M\boldsymbol{\Omega}', \textbf{G}_i')}{{k'}^2}\bigg], \ \ \ \  \text{for}\ \  n\geq 1,\label{even}\\
    \nabla^{2n+1}(A\boldsymbol{\Omega}', \textbf{G}_i') = \bigg(A_3 - \frac{I_3}{I_2}A_2\bigg)\bigg[(-1)^n{k'}^{2n+1}\frac{\Omega_3'[\textbf{G}_i', \boldsymbol{\Omega}']_3}{k'}\bigg], \ \ \ \  \text{for}\ \  n\geq 0,\label{odd}
\end{gather}
where:

\begin{equation}\label{m}
    M = diag\bigg(\frac{I_3}{I_2}{\Omega_3'}^2, \frac{I_3}{I_2}{\Omega_3'}^2, -({\Omega_1'}^2 + {\Omega_2'}^2)\bigg),\ \ \ \ \text{and}\ \ \ \ k'\equiv \sqrt{{\Omega'_1}^{2} + {\Omega'_2}^{2} + \frac{{I_3}^2}{{I_2}^2}{\Omega'_3}^2}.
\end{equation}

\end{lemma}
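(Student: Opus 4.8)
\medskip
\noindent\textbf{Proof strategy.}
First I would record how $\nabla$ acts on the generators: from (\ref{1omega}), $\nabla\Omega_1'=\phi'\Omega_2'$, $\nabla\Omega_2'=-\phi'\Omega_1'$, $\nabla\Omega_3'=0$, and from (\ref{poisson2}), $\nabla R_{ij}'=[\textbf{G}_i',\boldsymbol{\Omega}']_j$, with $\phi'=(I_2-I_3)\Omega_3'/I_2$; in particular the combination $\Omega_3'-\phi'=(I_3/I_2)\Omega_3'$ will recur throughout. Since $\nabla$ is a derivation, it annihilates every function of $\Omega_3'$, and by the relation $2E=I_2({\Omega_1'}^2+{\Omega_2'}^2)+I_3{\Omega_3'}^2$ together with $\nabla E=\nabla\Omega_3'=0$ it also annihilates every function of ${\Omega_1'}^2+{\Omega_2'}^2$; hence $\nabla k'=0$ and $\nabla$ kills each diagonal entry of $M$. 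This is what will let $k'$ and $M$ be carried through successive applications of $\nabla$.

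The first key computation is that for any diagonal $A=diag(A_2,A_2,A_3)$ whose entries are constants or lie in $\ker\nabla$,
\[
\nabla(A\boldsymbol{\Omega}',\textbf{G}_i')=\Big(A_3-\tfrac{I_3}{I_2}A_2\Big)\,\Omega_3'\,[\textbf{G}_i',\boldsymbol{\Omega}']_3 .
\]
I would obtain this from the Leibniz rule $\nabla(A\boldsymbol{\Omega}',\textbf{G}_i')=(A\,\nabla\boldsymbol{\Omega}',\textbf{G}_i')+(A\boldsymbol{\Omega}',\nabla\textbf{G}_i')$: the first term is rewritten using $\nabla\boldsymbol{\Omega}'=\phi'T_3\boldsymbol{\Omega}'$ and $AT_3\boldsymbol{\Omega}'=A_2T_3\boldsymbol{\Omega}'$ (the third entry of $T_3\boldsymbol{\Omega}'$ vanishes), the second using $\nabla\textbf{G}_i'=[\textbf{G}_i',\boldsymbol{\Omega}']$, the scalar triple product, and the vector identity $[A\boldsymbol{\Omega}',\boldsymbol{\Omega}']=(A_2-A_3)\Omega_3'\,T_3\boldsymbol{\Omega}'$; both pieces become multiples of $(\textbf{G}_i',T_3\boldsymbol{\Omega}')=[\textbf{G}_i',\boldsymbol{\Omega}']_3$, and the scalar $\phi'A_2+(A_3-A_2)\Omega_3'$ collapses, through the definition of $\phi'$, to $(A_3-\tfrac{I_3}{I_2}A_2)\Omega_3'$. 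Applying this with $A=I$ produces the $n=0$ instance of (\ref{odd}); applying it with $A=M$---for which $M_3-\tfrac{I_3}{I_2}M_2=-({\Omega_1'}^2+{\Omega_2'}^2)-\tfrac{{I_3}^2}{{I_2}^2}{\Omega_3'}^2=-{k'}^2$---produces $\nabla(M\boldsymbol{\Omega}',\textbf{G}_i')=-{k'}^2\,\Omega_3'[\textbf{G}_i',\boldsymbol{\Omega}']_3$.

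The second key computation is the ``return'' identity $\nabla\big(\Omega_3'[\textbf{G}_i',\boldsymbol{\Omega}']_3\big)=(M\boldsymbol{\Omega}',\textbf{G}_i')$. Since $\nabla\Omega_3'=0$, it equals $\Omega_3'\,\nabla[\textbf{G}_i',\boldsymbol{\Omega}']_3$; expanding $[\textbf{G}_i',\boldsymbol{\Omega}']_3=R_{i1}'\Omega_2'-R_{i2}'\Omega_1'$ by Leibniz and inserting the generator derivatives above, the mixed terms cancel and $\Omega_3'-\phi'=(I_3/I_2)\Omega_3'$ reshapes the survivor into exactly $(M\boldsymbol{\Omega}',\textbf{G}_i')$ after multiplication by $\Omega_3'$. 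Combining the two identities, the pair of scalars $\{\,\Omega_3'[\textbf{G}_i',\boldsymbol{\Omega}']_3,\ (M\boldsymbol{\Omega}',\textbf{G}_i')\,\}$ is mapped into itself by $\nabla$, with $\nabla^2$ acting on each as multiplication by $-{k'}^2$, so $\nabla^{2m}$ multiplies each by $(-1)^m {k'}^{2m}$. Hence, for $n\geq1$, $\nabla^{2n}(A\boldsymbol{\Omega}',\textbf{G}_i')=(A_3-\tfrac{I_3}{I_2}A_2)\,\nabla^{2n-2}(M\boldsymbol{\Omega}',\textbf{G}_i')=(A_3-\tfrac{I_3}{I_2}A_2)(-1)^{n-1}{k'}^{2n-2}(M\boldsymbol{\Omega}',\textbf{G}_i')$, while $\nabla^{2n+1}(A\boldsymbol{\Omega}',\textbf{G}_i')=(A_3-\tfrac{I_3}{I_2}A_2)\,\nabla^{2n}\big(\Omega_3'[\textbf{G}_i',\boldsymbol{\Omega}']_3\big)=(A_3-\tfrac{I_3}{I_2}A_2)(-1)^n {k'}^{2n}\Omega_3'[\textbf{G}_i',\boldsymbol{\Omega}']_3$; rewriting $(-1)^{n-1}=-(-1)^n$ and reinserting the harmless factors ${k'}^2/{k'}^2$ and $k'/k'$ yields precisely (\ref{even}) and (\ref{odd}). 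The main obstacle is the second key computation: it is the single step where the precise form of $M$ and the symmetry $I_1=I_2$ are genuinely used, and the Levi-Civita/cross-product bookkeeping must be carried out carefully; I would also note in passing that the quotients by $k'$ in the statement tacitly assume $k'\neq0$, whereas the underlying identities (e.g.\ $\nabla(M\boldsymbol{\Omega}',\textbf{G}_i')=-{k'}^2\Omega_3'[\textbf{G}_i',\boldsymbol{\Omega}']_3$) hold verbatim also for $k'=0$.
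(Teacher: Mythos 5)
Your proposal is correct and follows essentially the same route as the paper: both establish $\nabla(A\boldsymbol{\Omega}',\textbf{G}_i')=(A_3-\tfrac{I_3}{I_2}A_2)\,\Omega_3'[\textbf{G}_i',\boldsymbol{\Omega}']_3$ and $\nabla\big(\Omega_3'[\textbf{G}_i',\boldsymbol{\Omega}']_3\big)=(M\boldsymbol{\Omega}',\textbf{G}_i')$, then substitute $A\to M$ (legitimate because the entries of $M$ lie in $\ker\nabla$) to obtain the geometric progression $\nabla^2=-{k'}^2$ on that pair of scalars. The only difference is cosmetic: the paper derives the first identity from the conservation law $\nabla m_i=0$, whereas you obtain it by a direct Leibniz computation with the cross-product identity $[A\boldsymbol{\Omega}',\boldsymbol{\Omega}']=(A_2-A_3)\Omega_3'T_3\boldsymbol{\Omega}'$ — both are valid, and your closing remark about the tacit assumption $k'\neq 0$ is a fair observation.
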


The above lemma will be used to obtain the coefficients of the series (\ref{alexei}). It is a consequence of the integrals of motion of the EP equations. Firstly, consider the useful facts:\\
\textbf{1}. The function $f(E, \Omega_3') = (2E - I_3{\Omega_3'}^2)/I_2 = {\Omega_1'}^2 + {\Omega_2'}^2$ also belongs to the kernel of $\nabla$. With this, we see that the components of the matrix $M$ satisfies $\nabla M_i =0$. Evidently, since $k' = \sqrt{M_3 - I_3M_2/I_2}$, we also have $\nabla k' = 0$.\\
\textbf{2}. The application of $\nabla$ in the third component in $\textbf{G}_i'$ provide us with $\nabla R_{i3}' = [\textbf{G}_i', \boldsymbol{\Omega}']_3$. Substituting this in $\nabla m_i = I_2\nabla(\Omega_1'R_{i1}' + \Omega_2'R_{i2}') + I_3\Omega_3'\nabla R_{i3}'= 0$, we get the relation $\nabla(\Omega_1'R_{i1}' + \Omega_2'R_{i2}') = - I_3\Omega_3'\nabla R_{i3}'/I_2$. Then, we are able to write a linear combination of $\nabla(\Omega_1'R_{i1}' + \Omega_2'R_{i2}')$ and $\nabla (\Omega_3' R_{i3}')$ as follows

\begin{align}\label{d1}
    \nabla(A\boldsymbol{\Omega}', \textbf{G}_i') = A_2\nabla(\Omega_1'R_{i1}' + \Omega_2'R_{i2}') + A_3\nabla (\Omega_3'R_{i3}') = \bigg(A_3-\frac{I_3}{I_2}A_2\bigg)\Omega_3'[\textbf{G}_i', \boldsymbol{\Omega}']_3.
\end{align}
Next, applying $\nabla$ in (\ref{d1}), by direct computation, we get $\nabla^2(A\boldsymbol{\Omega}', \textbf{G}_i') = (A_3-I_3A_2/I_2)(M\boldsymbol{\Omega}', \textbf{G}_i')$. Since the components in $M$ belong to the kernel of $\nabla$, the matrix $M$ is treated in the same way as $A$ in relative to that operator. Then, we can change $A$ into $M$ in the latter equation and write $\nabla^2(M\boldsymbol{\Omega}', \textbf{G}_i') = (M_3-I_3M_2/I_2)(M\boldsymbol{\Omega}', \textbf{G}_i') = -{k'}^2(M\boldsymbol{\Omega}', \textbf{G}_i')$. This relation can be understood as a geometric progression. Indeed, considering the sequence $a_n = \nabla^{2n}(M\boldsymbol{\Omega}', \textbf{G}_i')$, we have $a_{n+1} = \nabla^{2n}[\nabla^2(M\boldsymbol{\Omega}', \textbf{G}_i')] = -{k'}^2\nabla^{2n}(M\boldsymbol{\Omega}', \textbf{G}_i') = -{k'}^2a_n$. Furthermore, its general term has a well-known form given by $\nabla^{2n}(M\boldsymbol{\Omega}', \textbf{G}_i') = (-{k'})^{2n}(M\boldsymbol{\Omega}', \textbf{G}_i')$. Then, substituting this result in $\nabla^{2n}(A\boldsymbol{\Omega}', \textbf{G}_i') = (A_3-I_3A_2/I_2)\nabla^{2n-2}(M\boldsymbol{\Omega}', \textbf{G}_i')$, we get the expression (\ref{even}) and, after this, applying $\nabla$ in both sides we get (\ref{odd}). In contrast, this lemma also could be obtained by direct computation of terms $\nabla^n(A\boldsymbol{\Omega}', \textbf{G}_i')$ and then made a mathematical induction. That means that, despite this result being a consequence of the integrals of motion, we see that the direct computation of the series (\ref{alexei}) already knows this information.  

Now, we will use this result sum the series (\ref{alexei}). The simplest coefficients to obtain are those of the components $R_{i3}$. Just put $A_2 = 0$ and $A_3 = 1/\Omega_3'$ in the lemma and the result will be the equations below:

\begin{gather}
    \nabla^{2n}{R'}_{i3} = -\frac{1}{\Omega_3'}(-1)^{n}{k'}^{2n}\frac{(M\boldsymbol{\Omega}', \textbf{G}_i')}{{k'}^2},\ \ \ \  \text{for}\ \  n\geq 1,\label{eveni3}\\
    \nabla^{2n+1}{R'}_{i3} = (-1)^n{k'}^{2n+1}\frac{[\textbf{G}_i', \boldsymbol{\Omega}']_3}{k'},\ \ \ \  \text{for}\ \  n\geq 0.\label{oddi3}
\end{gather}
Substituting them in (\ref{alexei}), we get the general solution to the third component of the Poisson equations in elementary functions: 

\begin{align}\label{gen3}
    \nonumber {R}_{i3}(t, \Omega_\gamma', R_{\alpha\beta}') &= {R'}_{i3} -\frac{1}{\Omega_3'} \sum_{n=1}^\infty\frac{t^{2n}}{2n!}(-1)^{n}{k'}^{2n}\frac{(M\boldsymbol{\Omega}', \textbf{G}_i')}{{k'}^2} + \sum_{n=0}^\infty\frac{t^{2n+1}}{(2n+1)!}(-1)^n{k'}^{2n+1}\frac{[\textbf{G}_i', \boldsymbol{\Omega}']_3}{k'}\\
    &={R'}_{i3} + (1-\cos{k't})\frac{(M\boldsymbol{\Omega}', \textbf{G}_i')}{\Omega_3'{k'}^2} + \sin{k't}\frac{[\textbf{G}_i', \boldsymbol{\Omega}']_3}{k'}.
\end{align}
The functions ${R}_{i1}(t, \Omega_\gamma', R_{\alpha\beta}')$ and  ${R}_{i2}(t, \Omega_\gamma', R_{\alpha\beta}')$ we present 
through ${R}_{i3}(t, \Omega_\gamma', R_{\alpha\beta}')$ as follows. Using (\ref{oddi3}) with  $n=0$ and the component $m_i$ of the conserved angular moment (\ref{im}) we get the linear system
\begin{equation}\label{linear}
     \Omega_2'{R'}_{i1} - \Omega_1'{R'}_{i2} = \nabla{R'}_{i3}, \ \ \ \ \ \ \ \ \ I_2(\Omega_1'{R'}_{i1} + \Omega_2'{R'}_{i2}) = m_i - I_3\Omega_3'{R'}_{i3}, 
\end{equation}
for determining the initial data ${R'}_{i1}$ and ${R'}_{i2}$ through ${R'}_{i3}$. Resolving, we get
\begin{equation}\label{c}
    {R'}_{i1} = \frac{1}{I_2({\Omega_1'}^2+ {\Omega_2'}^2)}[I_2\Omega_2'\nabla {R'}_{i3} + (m_i-I_3\Omega_3'{R'}_{i3})\Omega_1'],\ \ {R'}_{i2} = \frac{1}{I_2({\Omega_1'}^2+ {\Omega_2'}^2)}[-I_2\Omega_1'\nabla {R'}_{i3} + (m_i-I_3\Omega_3'{R'}_{i3})\Omega_2'].
\end{equation}
These relations hold for the initial data ${\Omega_1'}^2+ {\Omega_2'}^2\ne 0$. The case $\Omega_1'= \Omega_2'=0$ should be considered separately, see below. 

Let us apply the operator $e^{t\nabla}$ to both sides of these relations among the initial data. With the use of (\ref{prop1}) and (\ref{prop2}), we turn out them into the relations  among the solutions 
\begin{gather}
    {R}_{i1}(t, \Omega_\gamma', R_{\alpha\beta}') = \frac{1}{I_2({\Omega_1'}^2+ {\Omega_2'}^2)}[I_2 {\Omega}_2(t, \Omega_\gamma')\dot{{R}}_{i3}(t, \Omega_\gamma', R_{\alpha\beta}') + (m_i-I_3\Omega_3'{R}_{i3}(t, \Omega_\gamma', R_{\alpha\beta}'){\Omega}_1(t, \Omega_\gamma')],\\
     R_{i2}(t, \Omega_\gamma', R_{\alpha\beta}') = \frac{1}{I_2({\Omega_1'}^2+ {\Omega_2'}^2)}[-I_2\Omega_1(t, \Omega_\gamma')\dot{ R}_{i3}(t, \Omega_\gamma', R_{\alpha\beta}') + (m_i-I_3\Omega_3'{R}_{i3}(t, \Omega_\gamma', R_{\alpha\beta}'))\Omega_2(t, \Omega_\gamma')].
\end{gather}
In obtaining this result were used that the quantities $\Omega'_3$, $(\Omega'_1)^2+(\Omega'_2)^2$ and $m_i$ are in the kernel of $\nabla$. Substituting (\ref{domemgan1}) and (\ref{gen3}) in the above equations, we get them through the elementary functions:

\begin{align}\label{gen1}
    \nonumber  R_{i1}(t, \boldsymbol{\Omega}', R') &= \bigg\{\bigg[\bigg(I_2\Omega_2'\frac{\sin{k't}}{\Omega_3'k'} - I_3\Omega_1'\frac{(1-\cos{k't})}{{k'^2}}\bigg)\cos{\phi't} - \bigg(I_2\Omega_1'\frac{\sin{k't}}{\Omega_3'k'} + I_3\Omega_2'\frac{(1-\cos{k't})}{{k'^2}}\bigg)\sin{\phi't}\bigg](M\boldsymbol{\Omega}', \textbf{G}_i')\\
    \nonumber &+ \bigg[\bigg(I_2\Omega_2'\cos{k't} - I_3\Omega_1'\Omega_3'\frac{\sin{k't}}{k'}\bigg)\cos{\phi't} - \bigg(I_2\Omega_1'\cos{k't} + I_3\Omega_2'\Omega_3'\frac{\sin{k't}}{k'}\bigg)\sin{\phi't}\bigg][\boldsymbol{\Omega}',\textbf{G}_i']_3 \\
     &+ (m_i - I_3\Omega_3'{R'}_{i3})(\Omega_1'\cos\phi' t + \Omega_2'\sin\phi' t)\bigg\}\bigg/[I_2({\Omega_1'}^2+ {\Omega_2'}^2)],
\end{align}

\begin{align}\label{gen2}
    \nonumber  R_{i2}(t, \boldsymbol{\Omega}', R') &= \bigg\{\bigg[\bigg(I_2\Omega_1'\frac{\sin{k't}}{\Omega_3'k'} + I_3\Omega_2'\frac{(1-\cos{k't})}{{k'^2}}\bigg)\cos{\phi't} + \bigg(I_2\Omega_2'\frac{\sin{k't}}{\Omega_3'k'} - I_3\Omega_1'\frac{(1-\cos{k't})}{{k'^2}}\bigg)\sin{\phi't}\bigg](M\boldsymbol{\Omega}', \textbf{G}_i')\\
    \nonumber &+ \bigg[\bigg(I_2\Omega_1'\cos{k't} + I_3\Omega_2'\Omega_3'\frac{\sin{k't}}{k'}\bigg)\cos{\phi't} + \bigg(I_2\Omega_2'\cos{k't} - I_3\Omega_1'\Omega_3'\frac{\sin{k't}}{k'}\bigg)\sin{\phi't}\bigg][\boldsymbol{\Omega}',\textbf{G}_i']_3 \\
    &- (m_i - I_3\Omega_3'{R'}_{i3})(-\Omega_1'\sin\phi' t +\Omega_2'\cos\phi' t)\bigg\}\bigg/[-I_2({\Omega_1'}^2+ {\Omega_2'}^2)],
\end{align}
where ${\Omega_1'}^2 + {\Omega_2'}^2\neq 0$.\\
\textbf{The case $\Omega_1' = \Omega_2' = 0$.} For the initial data with $\Omega_1'= \Omega_2'=0$, Eq. (17) implies ${\boldsymbol\Omega}(t, \Omega'_j)=(0, 0, \Omega'_3)^T$. Then the Poisson equations (11) state that the vectors ${\bf G}_i$ precess around this constant vector.
With this, the Poisson equations are rewritten as 

\begin{equation}\label{poisson3}
    \dot R_{i1} = \Omega_3'R_{i2},\ \ \ \ \ \ \dot R_{i2} = -\Omega_3'R_{i1},\ \ \ \ \ \ \dot R_{i3} =0.
\end{equation}
That is a linear differential also written in the form $\dot{\textbf{G}}_i = \Omega_3'T_3\textbf{G}_i$. That is the same as (\ref{1omega}), then the general solution to Poisson equations (\ref{poisson3}) is given by

\begin{equation}\label{aad4}
R(t, \Omega_3', R_{ij}') = 
    \begin{pmatrix}
    {R'}_{11}\cos{\Omega_3't} + {R'}_{12}\sin{\Omega_3't}\ \  & -{R'}_{11}\sin{\Omega_3't} + {R'}_{12}\cos{\Omega_3't}\ \  & {R'}_{13} \\ \\
    {R'}_{21}\cos{\Omega_3't} + {R'}_{22}\sin{\Omega_3't}\  \ & -{R'}_{21}\sin{\Omega_3't} + {R'}_{22}\cos{\Omega_3't}\ \  & {R'}_{23} \\ \\
   {R'}_{31}\cos{\Omega_3't} + {R'}_{32}\sin{\Omega_3't}\ \  & -{R'}_{31}\sin{\Omega_3't} + {R'}_{32}\cos{\Omega_3't}\ \  & {R'}_{33}
\end{pmatrix}
    .
\end{equation}

The equations (\ref{domemgan1}), (\ref{gen3}), (\ref{gen1}), (\ref{gen2}) and (\ref{aad4}) configure the general solution to the Euler-Poisson equations for the case $I_1=I_2$.\\
\textbf{General solution to Poisson equations describing a free symmetric rigid body.} As we saw previously, the motion of the rigid body corresponds to the solutions (\ref{gen3}), (\ref{gen1}) and (\ref{gen2}) at the point $(t, m_\gamma/I_\gamma, \delta_{\alpha\beta})$. Besides, the final expression for the rotation matrix acquires a more transparent form if we adjust the orientations of the Laboratory axes and the vector of conserved angular momentum. Since $\textbf{R}_i(0)$ are the eigenvectors of the inertia tensor $I$, we can arbitrarily set $I\textbf{R}_1(0) = I_2\textbf{R}_1(0)$ and $I\textbf{R}_2(0) = I_2\textbf{R}_2(0)$. Then any linear combination of the vectors $\textbf{R}_1(0)$ and $\textbf{R}_2(0)$ is an eigenvector of $I$ with eigenvalue $I_2$. Furthermore, since we have $\textbf{R}_i(0) = \textbf{e}_i$, we can freely rotate the vectors $\textbf{e}_1, \textbf{e}_2$ (generating a new orthonormal basis $\{\textbf{e}_1', \textbf{e}_2', \textbf{e}_3\}$) from the Laboratory basis until the fixed vector $\textbf{m}$ (written in the old basis $\textbf{e}_i$) belongs to the plane generated by $\textbf{e}_2', \textbf{e}_3$ without breaking the diagonal character of the inertia tensor $I$. Then, from the beginning, we can choose this configuration, maintaining $I$ as diagonal and having  $m_1 = 0$. Ultimately, the initial conditions to the Euler-Poisson equations are translated into the constants of the general solution to the EP equations  as $R_{ij}'= \delta_{ij}$, $\Omega_1'=0$, $\Omega_2' = m_2/I_2$ and $\Omega_3'= m_3/I_3$. Substituting these constants in (\ref{domemgan1}), (\ref{gen1}), (\ref{gen2}), (\ref{gen3}), we get the equations of motion of a free symmetric rigid body given by the angular velocity 

\begin{align}\label{o1}
    \Omega_1 = \frac{m_2}{I_2}\sin\phi t,\ \ \ \ \ \Omega_2  = \frac{m_2}{I_2}\cos\phi t,\ \ \ \ \ \ \Omega_3 = \frac{m_3}{I_3},\ \ \ \ \text{where}\ \ \phi = (I_2-I_3)m_3/I_2I_3,
\end{align}
and the rotational matrix $R(t)$

\begin{equation}\label{Rmov}
    \begin{pmatrix}
    \cos kt\cos\phi t - \hat m_3 \sin kt\sin\phi t & -\cos kt\sin\phi t - \hat m_3\sin kt\cos\phi t & \hat m_2\sin kt \\ \\
    \hat m_3\sin kt\cos\phi t + (\hat m_2^2 + \hat m_3^2\cos kt)\sin\phi t & -\hat m_3\sin kt\sin\phi t + (\hat m_2^2 + \hat m_3^2\cos kt)\cos\phi t & \hat m_2\hat m_3(1 - \cos kt) \\ \\
    -\hat m_2\sin kt\cos\phi t + \hat m_2\hat m_3(1 - \cos kt)\sin\phi t & \hat m_2\sin kt\sin\phi t + \hat m_2\hat m_3(1 - \cos kt)\cos\phi t & \hat m_3^2 + \hat m_2^2\cos kt
\end{pmatrix}
    ,
\end{equation}
that describes the rotation rotation of the points in the body about the center of mass after replacing this in (\ref{eul.1}). Above we have the frequencies $\phi = (I_2-I_3)m_3/I_2I_3$, $k = \sqrt{m_2^2 + m_3^2}/I_2 = |\textbf{m}|/I_2$ and, assuming $|\textbf{m}|\neq 0$, we denoted $\hat m_i = m_i/|\textbf{m}|$. The case $\Omega_1' = \Omega_2' = 0$ turns out to be 

\begin{equation}\label{Rmov2}
R(t) = 
    \begin{pmatrix}
    \cos{\frac{m_3}{I_3}t} \ \  & -\sin{\frac{m_3}{I_3}t}\ \  & 0  \\
    \sin{\frac{m_3}{I_3}t}\  \ & \cos{\frac{m_3}{I_3}t}\ \  & 0  \\
   0\ \  & 0\ \  & 1
\end{pmatrix}.
\end{equation}.\\
\textbf{Motion of the body.} With the purpose of an illustrative point of view, suppose $\textbf{R}_3(t) = (\hat m_2\sin kt,\ \  \hat m_2\hat m_3(1 - \cos kt),\ \  \hat m_3^2 + \hat m_2^2\cos kt)^T$ being the position vector of a particle. Then, the velocity of this particle is given by $\dot{\textbf{R}}_3(t) = k(\hat m_2\cos kt,\ \  \hat m_2\hat m_3\sin kt,\ \ -\hat m_2^2\sin kt)^T$. These two vectors are orthogonal to each other and, besides, realize that the velocity of the particle is orthogonal to the conserved angular momentum $\textbf{m} = (0, m_2, m_3)^T$: $(\textbf{m}, \dot{\textbf{R}}_3(t)) = k\hat{m}_2\sin{kt}(\hat{m}_3m_2 - \hat{m}_2m_3) = 0$. So, the particle moves in the plane with normal vector in the same direction as $\textbf{m}$. Moreover, it has a closed trajectory since the position vector is periodic, that is $\textbf{R}_3(t) = \textbf{R}_3(t + 2\pi/k)$. If $\theta$ is the angle between the vector $\textbf{R}_3(t)$ and the normal vector from the plane, then 

\begin{equation}
    \cos \theta = \frac{(\textbf{m}, \textbf{R}_3(t))}{|\textbf{m}|} = \hat{m}_2^2\hat{m}_3(1-\cos{kt}) + \hat{m}_3^3 + \hat{m}_2^2\hat{m}_3\cos{kt} = \hat{m}_3.
\end{equation}
This angle does not vary with time, then we can conclude that the particles have a circular trajectory in the plane. The radius will be given by $\sin\theta = \hat{m}_2$, since the position vector has unitary length. Besides the rotation has uniform angular frequency $|\dot{\textbf{R}}_3(t)|/\hat{m}_2 = k$.

Furthermore, since the inertia axes are always in the same direction as the vectors $\textbf{R}_i(t)$, the third inertia axis precesses with uniform angular velocity $k$ around the axis in the direction as the conserved angular momentum $\textbf{m}$ while the others inertia axes are rotating about the variable axis in direction as $\textbf{R}_3(t)$ for each instant of time. The difference between this most general motion and the motion described by (\ref{Rmov2}) (case $m_1 = m_2 = 0$) is that the third inertia axis is fixed in the same direction as the third axis of Laboratory frame. Then the others will rotates about the third fixed axis with uniform angular velocity $m_3/I_3$.

\section{General solution to Poisson equations by direct summation of series for all $R_{ij}$}\label{2.1}

In the previous section the rotational matrix $R(t)$ was obtained by making use of proprieties caused by the integrals of motion. This path masked the calculation so much that it was not clear to see where we were summing series. So, to show even more the capacity of the formula (\ref{oprint}), in this section we will directly sum the series (\ref{alexei}). To this aim, we will show explicitly how are given the terms $\nabla^n {R}_{ij}'$. When $n=0$, we have the first term $R_{ij}'$. When $n=1$, the therm $\nabla R_{ij}$' identifies itself with the Poisson equations:
\begin{eqnarray}\label{kkk}
\nabla R_{ij}' = -\epsilon_{jkm}\Omega_k'R_{im}' = [\textbf{G}_i', \boldsymbol{\Omega}']_j.
\end{eqnarray}
When $n=2$, we get the formula:

\begin{equation}\label{d2}
    \nabla^2{R'}_{ij} = -\boldsymbol{\Omega}'^2{R'}_{ij} + (B_i)_j\Omega_j',
\end{equation}
where $B_i$ are three $3\times 3$ diagonal matrices  given by

\begin{equation}\label{Bi}
    B_i=diag\big(( B_2\boldsymbol{\Omega'},\textbf{G}_i'), ( B_2\boldsymbol{\Omega'},\textbf{G}_i'), ( B_3\boldsymbol{\Omega'},\textbf{G}_i')\big),\ \ \ \ \text{where}\ \ \ \  B_2 = diag\bigg(1, 1, 2 - \frac{I_3}{I_2}\bigg)\ \ \text{and}\ \  B_3 = diag\bigg(\frac{I_3}{I_2}, \frac{I_3}{I_2}, 1\bigg).
\end{equation}
This equation is obtained by direct computation\footnote{In this derivation the vector product $[\textbf{G}'_i,T\boldsymbol{\Omega}']$ was computed explicitly.}:

\begin{align}
    \nonumber\nabla^2{R'}_{ij} &= \nabla(\epsilon_{j\alpha\beta}{R'}_{i\alpha}\Omega'_\beta) = \epsilon_{j\alpha\beta}[\textbf{G}_i', \boldsymbol{\Omega}']_\alpha\Omega'_\beta +\epsilon_{j\alpha\beta}{R'}_{i\alpha}(I^{-1}[I\boldsymbol{\Omega}', \boldsymbol{\Omega}'])_\beta = [[\textbf{G}_i', \boldsymbol{\Omega}'], \boldsymbol{\Omega}']_j + \phi'[\textbf{G}'_i, T\boldsymbol{\Omega}']_j \\
    \nonumber&=(\textbf{G}'_i, \boldsymbol{\Omega}')\Omega'_j - {\boldsymbol{\Omega}'}^2{R'}_{ij} + [(I_2-I_3)/I_2][diag({R'}_{i3}\Omega_3', {R'}_{i3}\Omega_3', -({R'}_{i1}\Omega_1'+ {R'}_{i2}\Omega_2'))]_j\Omega'_j \\ &= - {\boldsymbol{\Omega}'}^2{R'}_{ij} + (B_i)_j\Omega'_j.
\end{align}
With this result, we do not need to directly compute the next terms anymore. Indeed, previously, we saw that $\Omega_3'$ and ${\Omega_1'}^2 + {\Omega_2'}^2$ belongs to the kernel of $\nabla$. This implies that ${\boldsymbol{\Omega}'}^2 = {\Omega_i'}{\Omega_i'}$ also belongs to that set. Furthermore, we can define recursively the even orders of the terms $\nabla^n{R'}_{ij}$ as follows: 

\begin{align}\label{indpar}
    \nabla^0R_{ij}' = R_{ij}',\ \ \ \  \text{and}\ \ \ \  \nabla^{2(n+1)}R_{ij}' = -{\boldsymbol{\Omega}'}^2\nabla^{2n}R_{ij}' + \nabla^{2n}[(B_i)_j\Omega_j'],\ \ \ \  \text{for}\ \ n\geq 0.
\end{align}
The terms of odd orders are given by applying $\nabla$ to both sides of the above equations:

\begin{align}\label{indimpar}
    \nabla R_{ij}' = -\epsilon_{jkm}\Omega_k'R_{im}',\ \ \ \  \text{and}\ \ \ \  \nabla^{2(n+1)+1}R_{ij}' = -{\boldsymbol{\Omega}'}^2\nabla^{2n+1}R_{ij}' + \nabla^{2n+1}[(B_i)_j\Omega_j'],\ \ \ \  \text{for}\ \ n\geq 0.
\end{align}
The terms $\nabla^n[(B_i)_j\Omega_j']$ are computed according to the general Leibniz rule (\ref{derivadan}). For the convenience of the reader, below we have explicitly these expressions:

\begin{gather}
    \nabla^{2n}[(B_i)_j\Omega_j'] = \sum_{a=0}^n\binom{2n}{2a}\nabla^{2a}(B_i)_j\nabla^{2n - 2a}\Omega_j' + \sum_{b=0}^{n-1}\binom{2n}{2b+1}\nabla^{2b+1}(B_i)_j\nabla^{2n - (2b+1)}\Omega_j',\ \ \text{for all}\ \ n\geq 1,\label{pd1}\\
    \nabla^{2n+1}[(B_i)_j\Omega_j'] = \sum_{a=0}^n\binom{2n+1}{2a}\nabla^{2a}(B_i)_j\nabla^{2n+1 - 2a}\Omega_j' + \sum_{b=0}^{n}\binom{2n+1}{2b+1}\nabla^{2b+1}(B_i)_j\nabla^{2n - 2b}\Omega_j',\ \ \text{for all}\ \ n\geq 0,\label{pd2}
\end{gather}
where we have $\nabla^n\boldsymbol{\Omega}' = (\phi'T_3)^n\boldsymbol{\Omega} = {\phi'}^{n}T_3^{n}\boldsymbol{\Omega}'$, that is $\nabla^{2k}\Omega_j' = {\phi'}^{2k}[diag(\Omega_1',\Omega_2', 0)]_j$ (except by $n=0$) and $\nabla^{2k+1}\Omega_j' = \phi'^{2k+1}(T_3\boldsymbol{\Omega}')_j$; and the terms $\nabla^{n}(B_i)_j$ are given by the lemma 1.

Furthermore, the equations (\ref{indpar}), (\ref{indimpar}) completely determine the coefficients of the series (\ref{alexei}). Next, we will show how to use them to obtain the component $R_{11}$ in the matrix (\ref{Rmov}). We chose this component because its calculation has relatively the same difficulty as the others $R_{i1}$, $R_{i2}$ and the calculation to the $R_{i3}$ is more simple and does not show too much the applicability of our computations. To carry out this, firstly we will replace the initial conditions of a symmetric rigid body in the equations (\ref{indimpar})-(\ref{pd2}). To facilitate the computation, we considered them in the form:

\begin{equation}\label{ic}
    \Omega_1' = 0, \ \ \ \ \  \Omega_2' = \frac{m_2}{I_2} = \hat m_2 k, \ \ \ \ \ \Omega_3' = \frac{m_3}{I_3} = \phi + \hat m_3 k, \ \ \ \ \  {R'}_{ij} = \delta_{ij}, 
\end{equation}
where $\phi = (I_2-I_3)m_3/I_2I_3$, $k = \sqrt{{m_2}^2 + {m_3}^2}/I_2 = |\textbf{m}|/I_2$ and $\hat m_i = m_i/|\textbf{m}|$, with $|\textbf{m}|\neq 0$. When the initial conditions are substituted, the constants $\phi'$ and $k'$ become $\phi$ and $k$ from the matrix (\ref{Rmov}), respectively. Then, after being replaced the conditions (\ref{ic}) in the equations (\ref{pd1}) and (\ref{pd2}), we get

\begin{align}
    \nabla^{2n}[( B_2\boldsymbol{\Omega}', \textbf{G}_1')\Omega_1'] &= 
    \begin{cases}
        0, \ \ \text{for}\ \ n = 0,\\
        -2\hat {m_2}^2(-1)^n\sum_{b=0}^{n-1}\frac{2n!}{(2b+1)![2n-(2b+1)]!}k^{2b+2}\phi^{2n-2b}, \ \ \text{for}\ \ n\geq1,
    \end{cases}\\
    \nabla^{2n+1}[( B_2\boldsymbol{\Omega}', \textbf{G}_1')\Omega_1'] &= 0,\ \ \text{for} \ \ n\geq 0.
\end{align}
So, substituting this and the initial conditions (\ref{ic}) in the functions (\ref{indpar}), (\ref{indimpar}), we obtain the recursive functions 

\begin{gather}
    \nonumber \nabla^0R_{11}' = 1,\ \ \ \  \nabla^2R_{11}' = -(k^2 + \phi^2 + 2\hat m_3k\phi), \\ \nabla^{2(n+1)}R_{ij}' = -(k^2 + \phi^2 + 2\hat m_3k\phi)\nabla^{2n}R_{ij}' -2\hat {m_2}^2(-1)^n\sum_{b=0}^{n-1}\frac{2n!}{(2b+1)![2n-(2b+1)]!}k^{2b+2}\phi^{2n-2b},\ \  \text{for}\ \ n\geq 1,\label{indpar0}
\end{gather}
and 

\begin{align}\label{indimpar0}
    \nabla R_{ij}' = 0,\ \ \ \  \nabla^{2(n+1)+1}R_{ij}' = -(k^2 + \phi^2 + 2\hat m_3k\phi)\nabla^{2n+1}R_{ij}',\ \ \ \  \text{for}\ \ n\geq 0,
\end{align}
that determines all the coefficients of the series of $R_{11}$. Then, we will obtain that series by applying the above formulas: firstly, it is immediate that all the terms in (\ref{indimpar0}) are null. So, the coefficients of the series to $R_{11}(t)$ are given only by the function (\ref{indpar}). Then, the even terms described by (\ref{indpar0}) are given as follows: the two first terms are given $\nabla^0R_{11}' = 0$ and

\begin{align}
    \nabla^2{R'}_{11} &= (k^2 + \phi^2 + 2\hat m_3k\phi)
    = -2!\bigg(\frac{k^2}{2!} + \frac{\phi^2}{2!} + \hat m_3k\phi\bigg)=-2!\bigg[\sum_{k=0}^1\bigg(\frac{1}{2k!(2-2k)!}k^{2k}\phi^{2-2k}\bigg) + \hat m_3k\phi\bigg].
\end{align}
Then, for $n=1$, we have

\begin{align}
    \nonumber\nabla^4{R'}_{11} &= -(k^2 + \phi^2 + 2\hat m_3k\phi)\nabla^2{R'}_{11} + 4\hat {m_2}^2k^{2}\phi^{2}
    = (k^2 + \phi^2 + 2\hat m_3k\phi)^2 + 4\hat {m_2}^2k^{2}\phi^{2}\\
    \nonumber&=4!\bigg[\frac{k^4}{4!} + \frac{\phi^4}{4!} + \frac{k^2}{2!}\frac{\phi^2}{2!} + \hat m_3\bigg(\frac{k^3}{3!}\phi + k\frac{\phi^3}{3!}\bigg)\bigg]\\
    &=4!\bigg[\sum_{k=0}^2\bigg(\frac{1}{2k!(4-2k)!}k^{2k}\phi^{4-2k}\bigg) + \hat m_3\sum_{k=0}^1 \bigg(\frac{1}{(2k+1)![4-(2k+1)]!}k^{2k+1}\phi^{4-(2k+1)}\bigg)\bigg].
\end{align}
For $n=2$:

\begin{equation}
    \nabla^6{R'}_{11} = -6!\bigg[\sum_{k=0}^3\bigg(\frac{1}{2k!(6-2k)!}k^{2k}\phi^{6-2k}\bigg) + \hat m_3\sum_{k=0}^2\bigg( \frac{1}{(2k+1)![6-(2k+1)]!}k^{2k+1}\phi^{6-(2k+1)}\bigg)\bigg],
\end{equation}
and so on. So, the even terms obey the following pattern:

\begin{equation}\label{coef}
    \nabla^{2n}{R'}_{11} = (-1)^n2n!\bigg[\sum_{k=0}^n\bigg(\frac{1}{2k!(2n-2k)!}k^{2k}\phi^{2n-2k}\bigg) + \hat m_3\sum_{k=0}^{n-1} \bigg(\frac{1}{(2k+1)![2n-(2k+1)]!}k^{2k+1}\phi^{2n-(2k+1)}\bigg)\bigg].
\end{equation}
The above statement is confirmed by induction when substituting it in (\ref{indpar0}). Replacing that and (\ref{indimpar0}) in (\ref{alexei}), we obtain the series

\begin{equation}\label{r11}
    R_{11}(t) = \sum_{n=0}^\infty \sum_{k=0}^n\bigg(\frac{(-1)^nt^{2n}}{2k!(2n-2k)!}k^{2k}\phi^{2n-2k}\bigg) + \hat m_3\sum_{n=1}^\infty\sum_{k=0}^{n-1} \bigg(\frac{(-1)^nt^{2n}}{(2k+1)![2n-(2k+1)]!}k^{2k+1}\phi^{2n-(2k+1)}\bigg).
\end{equation}
This series can be represented in elementary functions as follows: consider the sequences 

\begin{gather}
    a_{m}(t) = \frac{(-1)^mk^{2m}t^{2m}}{2m!},\ \ \ \ \ \ 
    b_{m}(t) = \frac{(-1)^m\phi^{2m}t^{2m}}{2m!},\\
    c_{m}(t) = \frac{(-1)^{m}k^{2m+1}t^{2m+1}}{(2m+1)!},
    \ \ \ \ \ \  d_{m}(t) = \frac{(-1)^{m}\phi^{2m+1}t^{mk+1}}{(2m+1)!},
\end{gather}
and the equation (\ref{r11}) written as a function of them 

\begin{align}\label{r11'}
    R_{11}(t) = \sum_{n=0}^\infty\sum_{k=0}^na_{2k}b_{2n-2k} - \hat m_3\sum_{n=1}^\infty\sum_{k=0}^{n-1}c_{2k+1}d_{2n-(2k+1)}.
\end{align}
Since they represent the coefficients in the series of sines and cosines, they converge for any $t\in \mathbb{R}$. Besides, by the assertion of the Cauchy-Kovalevskaya theorem, the series (\ref{r11}) is locally convergent around $t=0$. Then, there exists a result \cite{bromwich2005introduction, abbott2001understanding} about the product of convergent power series that we are able to use. That says: if the power series $\sum_{n=0}^\infty a_nx^n$, $\sum_{n=0}^\infty b_nx^n$ converges (pointwise), then $\sum_{n=0}^\infty\sum_{i=0}^n (a_ix^i)(b_{n-i}x^{n-i}) = (\sum_{n=0}^\infty a_nx^n)(\sum_{n=0}^\infty b_nx^n)$ also converges. Furthermore, applying this result in (\ref{r11'}), we can write the function $R_{11}(t)$ as:

\begin{align}
    \nonumber R_{11}(t) &= \bigg(\sum_{n=0}^\infty a_n\bigg)\bigg(\sum_{n=0}^\infty b_n\bigg) - \hat m_3\bigg(\sum_{n=0}^\infty c_n\bigg)\bigg(\sum_{n=0}^\infty d_n\bigg)\\
    \nonumber &= \bigg(\sum_{n=0}^\infty\frac{(-1)^nk^{2n}t^{2n}}{2n!}\bigg)\bigg(\sum_{m=0}^\infty\frac{(-1)^m\phi^{2m}t^{2m}}{2m!}\bigg) - \hat m_3\bigg(\sum_{n=0}^\infty\frac{(-1)^nk^{2n+1}t^{2n+1}}{(2n+1)!}\bigg)\bigg(\sum_{m=0}^\infty\frac{(-1)^m\phi^{2m+1}t^{2m+1}}{(2m+1)!}\bigg)\\
    &=\cos kt\cos\phi t - \hat m_3 \sin kt\sin\phi t.
\end{align}
The same procedure can be done for the other components in (\ref{alexei}). Naturally, after doing this, we get the (\ref{Rmov}).\\
\textbf{Comment}: About the functions (\ref{indpar}), (\ref{indimpar}): we could also get the even and odd general terms $\nabla^{2n}{R'}_{ij}$, $\nabla^{2n+1}{R'}_{ij}$. For instance, this could be done for the even orders as follows: since $\nabla^2{R'}_{ij} = - {\boldsymbol{\Omega}'}^2{R'}_{ij} + (B_i)_j\Omega_j'$, then $\nabla^4{R'}_{ij} = - {\boldsymbol{\Omega}'}^2(- {\boldsymbol{\Omega}'}^2{R'}_{ij} + (B_i)_j\Omega_j') + \nabla^2[(B_i)_j\Omega_j']$, and so on. Performing mathematical induction, appear $\nabla^{2n}R_{ij}' = (-\boldsymbol{\Omega}')^nR_{ij}' + \sum_{k=0}^{n-1}(-\boldsymbol{\Omega}')^{n-1-k}\nabla^{2k}[(B_i)_j\Omega_j']$, where $n\geq 1$. Applying $\nabla$ in both sides in the latter equations, is obtained as the general term for the odd-order terms. The problem with this approach is the difficulty of writing the series $R_{i1}(t, \boldsymbol{\Omega}', R')$ and $R_{i2}(t, \boldsymbol{\Omega}', R')$ through elementary equations. Even if we simplify the series by substituting the initial conditions, it still not does help much. This same problem happens when trying to use (\ref{indpar}) and (\ref{indimpar}) to obtain the general solution to the Poisson equations.

\section{Conclusion}

In this article, we explore the dynamics of a free symmetric rigid body by solving the Euler-Poisson equations using an alternative method, distinct from the conventional use of Euler angles. The main goal was to obtain explicit solutions for the time evolution of the rotation matrix $R_{ij}(t)$ and the angular velocity components $\Omega_i(t)$ in terms of elementary functions. By leveraging the concept of integral lines for autonomous differential equations, we provide a general framework for solving the equations governing the body's motion. This approach allows us to calculate the rotation matrix $R_{ij}(t)$ in a closed form, bypassing the need for parametrization or solving the system of differential equations. For the case of a free symmetric body, the solution to the Euler-Poisson equations can be derived through summing power series. The representation through elementary function was possible with the use of the integrals of motion. The resulting rotation matrix is found to be consistent with previous work, confirming the validity of our approach.

To demonstrate the method, in Sect. 2, we obtain the known general solution for the angular velocity components $\Omega_i(t)$ by summing the corresponding power series, which results in elementary function expressions. With this result, in Sect. 3 we obtain the general solution to the Poisson equations (even for initial conditions that do not represent a rigid body) in elementary functions by summing the terms for $R_{ij}(t)$, leveraging the conserved quantities in the system. Then, we finish this section by obtaining the general solution to EP equations describing the most general motion of a free symmetric body (a regular precession). Finally, in Sect. 4, we made an alternative method to obtain this result by summing all components independently.

\section*{Acknowledgments}

The work has been supported by the Brazilian foundation CAPES (Coordenação de Aperfeiçoamento de Pessoal de Nível Superior - Brasil).

\section*{Appendix 1. Properties of the series $e^{th^k(z_0^j)\frac{\partial}{\partial z_0^k}}z_0^i$}

In this appendix, we exhibit and discuss some properties of the formula (\ref{oprint}). In this part of the text we were widely supported by the literature \cite{tung1985group, rotman2012introduction, bump2004lie, folland2020introduction, kepley2021constructive, spivak2006calculus, abbott2001understanding, duistermaat2012lie, bromwich2005introduction}. Firstly, let $V$ be an open subset of $\mathbb{R}^p$. Then, consider $p$ real analytic functions $h^k:V\to \mathbb{R}$. We define a linear differential operator $h^k(z_0^j)\frac{\partial}{\partial z_0^k}$ acting in sets of analytic functions $f:V\to \mathbb{R}$. Hence, the outcome $\big(h^k(z_0^j)\frac{\partial}{\partial z_0^k}\big)f(z_0^i)$ will also be a real analytic function. Since analytic functions are of class $C^\infty$, for any natural number $n$, we can recursively define the operator $\big(h^a(z_0^j)\frac {\partial}{\partial z^a_0}\big)^n$ with source and image in sets of analytic functions. Besides, performing mathematical induction, we see that it obeys the general Leibniz rule \cite{stewart2012calculus, olver1993applications, spivak2006calculus, fitzpatrick2009advanced}:

\begin{equation}\label{derivadan}
    \bigg(h^k(z_0^j)\frac {\partial}{\partial z^k_0}\bigg)^n(f(z_0^a)g(z_0^b)) = \sum_{i=0}^n\binom{n}{i} \bigg(h^{k^\prime}(z_0)\frac {\partial}{\partial z^{k^\prime}_0}\bigg)^i f(z^a_0)\ \bigg(h^{k^{\prime\prime}}(z_0)\frac {\partial}{\partial z^{k^{\prime\prime}}_0}\bigg)^{n-i}g(z^b_0), \ \ \ \ \text{for all} \ \ n \ \ \text{natural numbers},
\end{equation}
where $\binom{n}{i} = \frac{n!}{i!(n-i)!}$ is the binomial coefficient.

The next proposition clarifies how to appear the formula (\ref{oprint}):
\begin{proposition}\label{prop11}Suppose $V\subset \mathbb{R}^p$ is an open subset and let $h^i:V\to\mathbb{R}^p$ analytic functions, where $i = 1, ..., p$. Then, the initial value problem

\begin{equation}\label{sistema}
    \dot{z}^i = h^i(z^j),\ \ \ \ \ \ \ \ \ z^i(0) = z^i_0,\ \ \ \  \text{where}\ \ \ \ (z_0^1, z_0^2, ..., z_0^p)\in V,
\end{equation}
has a unique solution given by the series

\begin{equation}\label{serie}
    z^i(t, z^j_0) = e^{th^k(z^m_0)\frac {\partial}{\partial z^k_0}}z^i_0,
\end{equation}
which converges in some neighborhood of $0\in\mathbb{R}$.

\end{proposition}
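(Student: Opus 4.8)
The plan is to establish existence and uniqueness by reducing everything to the two operator identities \eqref{prop1} and \eqref{prop2}, which in turn rest on the general Leibniz rule \eqref{derivadan} and the Cauchy--Kovalevskaya convergence estimate. I would organize the argument into three parts: (i) convergence of the series \eqref{serie}, (ii) verification that it solves the system, (iii) uniqueness.

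\textbf{Convergence.} First I would invoke the Cauchy--Kovalevskaya theorem (or, since the system \eqref{sistema} is a first-order ODE system with analytic right-hand side, the classical Cauchy existence theorem for analytic ODEs) to obtain an analytic local solution $z^i(t)$ near $t=0$ with $z^i(0)=z_0^i$; being analytic, it has a convergent Taylor expansion $z^i(t)=\sum_n \frac{t^n}{n!}\frac{d^n}{dt^n}z^i(t)\big|_{t=0}$ in some neighborhood of $0$. The key point is then to identify the Taylor coefficients: differentiating $\dot z^i = h^i(z^j)$ repeatedly and using the chain rule, one shows by induction that $\frac{d^n}{dt^n}z^i(t)\big|_{t=0} = \big(h^k(z_0^j)\frac{\partial}{\partial z_0^k}\big)^n z_0^i$. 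This matches \eqref{prop2} read at $t=0$, iterated. Hence the formal series \eqref{serie} is exactly the Taylor series of the genuine analytic solution, so it converges in that same neighborhood and represents the solution there.

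\textbf{The series solves the system.} Alternatively (and this is the route the appendix seems to favor), I would first prove the operator composition property \eqref{prop1}: for analytic $f$, $e^{th^k\partial_{z_0^k}} f(z_0^i) = f(e^{th^k\partial_{z_0^k}} z_0^i)$. This follows by expanding both sides as power series in $t$, applying the general Leibniz rule \eqref{derivadan} to the products on the left, and matching with the multivariate Taylor expansion of $f$ composed with the series on the right — coefficient by coefficient the two agree. Granting \eqref{prop1}, the computation \eqref{prop2} is immediate: $\frac{d}{dt} e^{th^k\partial_{z_0^k}} z_0^i = e^{th^k\partial_{z_0^k}}\big[(h^k\partial_{z_0^k})z_0^i\big] = e^{th^k\partial_{z_0^k}} h^i(z_0^j) = h^i\big(e^{th^k\partial_{z_0^k}} z_0^j\big) = h^i(z^j(t,z_0))$, where the last equality is \eqref{prop1} applied to the analytic function $h^i$. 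Setting $t=0$ gives $z^i(0,z_0)=z_0^i$, so the series satisfies the initial value problem on its domain of convergence.

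\textbf{Uniqueness.} This is the standard Picard--Lindel\"of uniqueness: the $h^i$ are analytic, hence $C^1$, hence locally Lipschitz, so any two solutions of \eqref{sistema} with the same initial data coincide on a common interval around $0$; I would just cite this. \textbf{Main obstacle.} The routine-looking but genuinely fiddly step is the proof of \eqref{prop1} — making the double bookkeeping rigorous: one must justify interchanging the summation over $t$-powers with the (infinite) Taylor expansion of $f$, which requires the local convergence estimates from Cauchy--Kovalevskaya to control remainders, and one must correctly bundle the iterated Leibniz expansions into the composite Taylor coefficients (effectively Fa\`a di Bruno's formula in disguise). Everything else is bookkeeping with \eqref{derivadan} and a citation for Lipschitz uniqueness.
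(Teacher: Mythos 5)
Your primary argument (part (i)) is exactly the paper's proof: invoke the Cauchy--Kovalevskaya theorem to obtain a unique analytic local solution, then show by induction that $\frac{d^n z^i(t)}{dt^n}\big|_{t=0} = \big(h^k(z_0^j)\frac{\partial}{\partial z_0^k}\big)^n z_0^i$, so that the series \eqref{serie} is precisely the Taylor series of that solution; this is correct and complete. One caveat about your alternative route (ii): in the paper the identity \eqref{prop1} is proved \emph{after} this proposition, and its proof relies on the convergence of the series \eqref{serie} established here, so using \eqref{prop1} as the starting point for this proposition would be circular in the paper's logical ordering --- route (i) is the one to keep as the actual proof.
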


\begin{proof}

By the Cauchy-Kovalevskaya theorem \cite{kepley2021constructive}, the system (\ref{sistema}) has a unique solution $z(t)$ analytic in some open interval $J\subset{R}$ containing $0$. Furthermore, there exists $r>0$ such that, for all $t\in(-r, r)$, the Taylor series

\begin{equation}\label{taylor}
    z^i(t) = \sum_{n=0}^\infty\frac{t^n}{n!}\frac{\dd^nz^i(t)}{\dd t^n}\bigg|_{t=0},
\end{equation}
converges absolutely and uniformly. Without loss of generality, we can consider that same interval for each $i= 1, 2, ..., p$. We will show that the above series can be written in the form (\ref{serie}). Consider the following statement:

\begin{equation}\label{inducao}
    \frac{\dd^n z^i(t)}{\dd t^n} = \bigg[\bigg(h^k(z^j)\frac{\partial}{\partial z^k}\bigg)^n z^i\bigg]\bigg|_{z=z(t)}, \ \ \text{for all}\ \ n\ \ \text{natural numbers}.
\end{equation}
This holds by induction: when $n=0$, this is reduced to the identity $z^i(t) = z^i|_{z=z(t)}$. So, for the base case, the statement is true. We assume, by hypothesis, that this is true for a natural $n$. Then the induction step follows

\begin{align}
   \frac{\dd^{n+1} z^i(t)}{\dd t^{n+1}} &= \frac{\dd}{\dd t}\bigg[\bigg(h^k(z^j)\frac{\partial}{\partial z^k}\bigg)^n z^i\bigg]\bigg|_{z=z(t)} = \frac{\dd}{\dd t}{h'}^i(z^j(t)) =  \bigg[h^k(z)\frac{\partial{h'}^i(z^j)}{\partial z^k}\bigg]\bigg|_{z=z(t)}= \bigg[\bigg(h^k(z^j)\frac{\partial}{\partial z^k}\bigg)^{n+1} z^i\bigg]\bigg|_{z=z(t)},
\end{align}
where ${h'}^i(z^j) =(h^k(z^j)\frac{\partial}{\partial z^k})^n z^i$. Then, since both the base case and induction step are true, by weak induction, the statement (\ref{inducao}) holds true. At $t=0$, it gets the form

\begin{equation}
    \frac{\dd^n z^i(t)}{\dd t^n}\bigg|_{t=0} = \bigg\{\bigg[\bigg(h^k(z^j)\frac{\partial}{\partial z^k}\bigg)^n z^i\bigg]\bigg|_{z=z(t)}\bigg\}\bigg|_{t=0} = \bigg[\bigg(h^k(z^j)\frac{\partial}{\partial z^k}\bigg)^n z^i\bigg]\bigg|_{z=z_0} = \bigg(h^k(z_0^j)\frac{\partial}{\partial z_0^k}\bigg)^n z_0^i.
\end{equation}
Substituting this latter equation in the Taylor series (\ref{taylor}), we obtain:

\begin{equation}
    z^i(t) = \sum_{n=0}^\infty\frac{t^n}{n!}\bigg(h^k(z_0^j)\frac{\partial}{\partial z_0^k}\bigg)^n z_0^i = e^{th^k(z_0^j)\frac {\partial}{\partial z^k_0}}z^i_0\equiv {z}^i(t, z_0^j).
\end{equation}

\end{proof}

Since in (\ref{serie}) we have a $p$-parametric family of curves solving each differential equation (\ref{sistema}), then that formula also represents the general solution to the system $\dot{z}^i = h^i(z^j)$. Even more, it is known that the existence and uniqueness of the solution for autonomous systems can be generalized to non-autonomous systems. So, if we had a non-autonomous system 

\begin{equation}
    \dot{z}^i = h^i(t,z^j),\ \ \ \ \ \ \ \ \ z^i(t_0) = z^i_0,\ \ \ \  \text{where}\ \ \ \ (z_0^1, z_0^2, ..., z_0^p)\in V,
\end{equation}
the Cauchy-Kovalevskaya theorem also would guarantee the existence and uniqueness of an analytic solution to the above non-autonomous system. In this case, its solution would be written as
\begin{equation}\label{dept}
    z^i(t, t_0, z_0^j) = e^{t\big(\frac{\partial}{\partial t_0} + h^k(t_0,z_0^j)\frac {\partial}{\partial z^k_0}\big)}z^i_0.
\end{equation}

In some fields of mathematics and physics, the function (\ref{oprint}) is quite used. It typically has two main properties. The first: $ z^i(0, z_0^j) = z_0^i$ is immediately identified in the equation (\ref{oprint}). The second, also named "group law", is translated in this work by the proposition \ref{grouplaw}, where we prove the equation (\ref{prop1}). To carry out that, we will need the following results:

\begin{lemma}\label{t2} The operator $e^{th^k(z_0)\frac {\partial}{\partial z^k_0}}$ has the property:

\begin{equation}\label{opregra}
    e^{th^k(z_0^\gamma)\frac{\partial}{\partial z^k_0}}(z_0^iz_0^j) = \big(e^{th^a(z_0^\alpha)\frac {\partial}{\partial z^a_0}}z_0^i\big)\big(e^{th^b(z_0^\beta)\frac {\partial}{\partial z^b_0}}z_0^j\big) =  z^i(t, z_0^\alpha) z^j(t, z_0^\beta),\ \ \ \ \text{where} \ \ \ \ 1\leq i,j\leq p.
\end{equation}

\end{lemma}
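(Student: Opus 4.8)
The plan is to expand $e^{th^k(z_0^\gamma)\frac{\partial}{\partial z_0^k}}$ as its defining power series, apply the general Leibniz rule (\ref{derivadan}) term by term, and then recognise the outcome as the Cauchy product of the two series defining $z^i(t,z_0^\alpha)$ and $z^j(t,z_0^\beta)$. Concretely, I would start from
\begin{equation}
 e^{th^k(z_0^\gamma)\frac{\partial}{\partial z_0^k}}(z_0^i z_0^j) = \sum_{n=0}^\infty \frac{t^n}{n!}\bigg(h^k(z_0^\gamma)\frac{\partial}{\partial z_0^k}\bigg)^n (z_0^i z_0^j),
\end{equation}
and apply (\ref{derivadan}) with $f(z_0^a)=z_0^i$, $g(z_0^b)=z_0^j$ to rewrite each term as
\begin{equation}
 \bigg(h^k\frac{\partial}{\partial z_0^k}\bigg)^n (z_0^i z_0^j) = \sum_{m=0}^n \binom{n}{m}\bigg[\bigg(h^a\frac{\partial}{\partial z_0^a}\bigg)^m z_0^i\bigg]\,\bigg[\bigg(h^b\frac{\partial}{\partial z_0^b}\bigg)^{n-m} z_0^j\bigg].
\end{equation}

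Substituting this back and cancelling the $n!$ against the binomial coefficient turns the single series into the double series
\begin{equation}
 \sum_{n=0}^\infty\sum_{m=0}^n\bigg(\frac{t^m}{m!}\bigg(h^a\frac{\partial}{\partial z_0^a}\bigg)^m z_0^i\bigg)\bigg(\frac{t^{n-m}}{(n-m)!}\bigg(h^b\frac{\partial}{\partial z_0^b}\bigg)^{n-m} z_0^j\bigg),
\end{equation}
which is exactly the Cauchy product of $\sum_{p\ge 0}\frac{t^p}{p!}(h^a\partial_{z_0^a})^p z_0^i = e^{th^a\partial_{z_0^a}}z_0^i = z^i(t,z_0^\alpha)$ and $\sum_{q\ge 0}\frac{t^q}{q!}(h^b\partial_{z_0^b})^q z_0^j = z^j(t,z_0^\beta)$. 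To conclude, I would invoke Proposition \ref{prop11} (Cauchy--Kovalevskaya): there is a neighbourhood of $t=0$ on which both $z^i(t,z_0^\alpha)$ and $z^j(t,z_0^\beta)$ are given by absolutely convergent power series in $t$. The standard theorem on products of convergent power series then guarantees that the Cauchy product above converges on that neighbourhood and equals $z^i(t,z_0^\alpha)\,z^j(t,z_0^\beta)$, which is (\ref{opregra}).

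The algebraic core — identifying the Leibniz expansion as a Cauchy product — is routine; the one point requiring care is the final rearrangement of the double sum, which is legitimate only because the two factor series have a positive radius of convergence rather than being merely formal. That is precisely what Cauchy--Kovalevskaya supplies (applied componentwise to $\dot z^i = h^i(z^j)$ if needed), so I expect this convergence bookkeeping, rather than any computation, to be the main thing to state carefully.
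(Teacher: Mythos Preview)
Your proposal is correct and follows essentially the same route as the paper's proof: both combine the general Leibniz rule (\ref{derivadan}) with the Cauchy-product theorem for convergent power series, with convergence supplied by Proposition~\ref{prop11}. The only cosmetic difference is direction---the paper starts from the product $z^i(t,z_0^\alpha)z^j(t,z_0^\beta)$, applies the Cauchy-product identity, and then collapses the inner sum via Leibniz, whereas you run the same chain in reverse.
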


\begin{proof}

Consider the sequences $a_n(t) = \frac{t^n}{n!}\big(h^k(z_0^\alpha)\frac {\partial}{\partial z^k_0}\big)^n z_0^i$ and $b_n(t) = \frac{t^n}{n!}\big(h^k(z_0^\beta)\frac {\partial}{\partial z^k_0}\big)^n z_0^j$. The series $\sum_{n=0}^\infty a_n(t)$ and $\sum_{n=0}^\infty b_n(t)$ converges according to the proposition \ref{prop11}. Then, a result about the product of convergent series provides us with the relationship $\big(\sum_{n=0}^\infty a_n(t)\big)\big(\sum_{n=0}^\infty b_n(t)\big) = \sum_{n=0}^\infty \sum_{k=0}^n a_k(t)b_{n-k}(t)$, see the final of section \ref{2.1}. Writing explicitly, we have

\begin{align}
    \nonumber\big(e^{th^a(z_0^\alpha)\frac {\partial}{\partial z^a_0}}z_0^i\big)\big(e^{th^b(z_0^\beta)\frac {\partial}{\partial z^b_0}}z_0^j\big)  &= \bigg(\sum_{n=0}^\infty a_n(t)\bigg)\bigg(\sum_{n=0}^\infty b_n(t)\bigg) = \sum_{n=0}^\infty \sum_{k=0}^n a_k(t)b_{n-k}(t)\\ \nonumber&=\sum_{n=0}^\infty\sum_{k=0}^n\frac{t^n}{k!(n-k)!}\bigg(h^a(z_0^\alpha)\frac {\partial}{\partial z^a_0}\bigg)^k z_0^i\bigg(h^b(z_0^\beta)\frac {\partial}{\partial z^b_0}\bigg)^{n-k}z_0^j\\
    &=\sum_{n=0}^\infty\frac{t^n}{n!}\bigg(h^k(z_0^\gamma)\frac {\partial}{\partial z^k_0}\bigg)^{n}(z_0^iz_0^j) = e^{th^k(z_0^\gamma)\frac {\partial}{\partial z^k_0}}(z_0^iz_0^j),\label{p3}
\end{align}

where in the last equality was used the general Leibniz rule (\ref{derivadan}).

\end{proof}
The above lemma is generalized as follows:
\begin{lemma}[Generalization of the Lemma
\ref{t2}]\label{t3} Consider a product of the variables $z_0^i$:   $z_0^{i_1} ... z_0^{i_n}$, where $1\leq i_1, ..., i_n\leq p$. So, the application of the operator $e^{th^k(z_0)\frac{\partial}{\partial z_0^k}}$ in this product obeys 

\begin{equation}\label{generalizaton}
    e^{th^k(z_0^j)\frac {\partial}{\partial z^k_0}}(z_0^{i_1} ...z_0^{i_n}) = \big(e^{th^k(z_0^{j_1})\frac {\partial}{\partial z^k_0}}z_0^{i_1}\big)... \big(e^{th^k(z_0^{j_n})\frac {\partial}{\partial z^k_0}}z_0^{i_n}\big) =  z^{i_1}(t,z_0^{j_1})... z^{i_n}(t,z_0^{j_n}).
\end{equation}

\end{lemma}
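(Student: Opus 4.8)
The plan is to prove Lemma~\ref{t3} by induction on the number of factors $n$ in the product $z_0^{i_1}\cdots z_0^{i_n}$, using Lemma~\ref{t2} as the base case. First I would record the base cases: for $n=1$ the statement is the definition~\eqref{serie}, and for $n=2$ it is exactly the content of Lemma~\ref{t2}. Then I would assume~\eqref{generalizaton} holds for some $n\geq 2$ and all choices of indices $1\leq i_1,\dots,i_n\leq p$, and aim to establish it for a product of $n+1$ variables.

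For the inductive step, the key maneuver is to regroup the product as a product of two factors, one of which is again a product of variables, and apply Lemma~\ref{t2}. Concretely, write $z_0^{i_1}\cdots z_0^{i_{n+1}} = f(z_0)\, g(z_0)$ where $f(z_0) = z_0^{i_1}\cdots z_0^{i_n}$ and $g(z_0) = z_0^{i_{n+1}}$. The proof of Lemma~\ref{t2} only used the general Leibniz rule~\eqref{derivadan} together with the product-of-convergent-series result; both of these apply verbatim with $f$ and $g$ analytic functions rather than single coordinate variables, so the computation~\eqref{p3} gives $e^{th^k\frac{\partial}{\partial z_0^k}}(f(z_0)g(z_0)) = \big(e^{th^k\frac{\partial}{\partial z_0^k}}f(z_0)\big)\big(e^{th^k\frac{\partial}{\partial z_0^k}}g(z_0)\big)$. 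The first factor on the right is $e^{th^k\frac{\partial}{\partial z_0^k}}(z_0^{i_1}\cdots z_0^{i_n})$, which by the induction hypothesis equals $z^{i_1}(t,z_0^{j_1})\cdots z^{i_n}(t,z_0^{j_n})$, and the second factor is $z^{i_{n+1}}(t,z_0^{j_{n+1}})$ by definition~\eqref{serie}. Multiplying these yields~\eqref{generalizaton} for $n+1$ factors, completing the induction.

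The main obstacle, such as it is, is justifying that Lemma~\ref{t2} may be applied with one of its two arguments being a product (hence a genuine polynomial in the $z_0^i$, not a single coordinate). This requires either re-deriving~\eqref{p3} with $f,g$ general analytic functions — which is immediate since nothing in that derivation used the special form of the arguments beyond analyticity and the validity of the Leibniz rule~\eqref{derivadan} — or, alternatively, phrasing the whole induction so that Lemma~\ref{t2} is invoked only in its stated form by peeling off one variable at a time and absorbing the running product into the ``$z_0^i$'' slot via a relabeling. Either route is routine; I would take the first, remarking explicitly that~\eqref{derivadan} and the product-series identity hold for arbitrary convergent analytic data. Convergence of all the series involved is guaranteed throughout by Proposition~\ref{prop11} (equivalently the Cauchy–Kovalevskaya theorem), so no separate convergence argument is needed at the inductive step.
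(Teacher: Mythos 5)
Your proposal is correct and follows essentially the same route as the paper: induction on the number of factors, peeling off the last variable and re-running the Cauchy-product-plus-Leibniz-rule computation of Lemma~\ref{t2} with the $n$-fold product playing the role of one factor, with convergence of that factor's series supplied by the induction hypothesis (the paper handles this in a footnote, exactly as you flag). No gaps.
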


\begin{proof}
    We will prove this statement by induction. The base case (for $n=1$) is true by the equation (\ref{oprint}). The case $n=2$ was proven by the previous lemma. Furthermore, suppose, by hypothesis, that it is true for some natural $n$, i.e.:

    \begin{align}\label{stment}
           z^{i_1}(t,z_0^{j_1}) z^{i_2}(t,z_0^{j_2})... z^{i_n}(t,z_0^{j_n}) = e^{th^k(z_0^{j})\frac {\partial}{\partial z^k_0}}(z_0^{i_1}z_0^{i_2}...z_0^{i_n}).
    \end{align}
    Multiplying both sides by $ z^{i_{n+1}} = e^{th^k(z_0^{j_{n+1}})\frac{\partial}{\partial z_0^k}}z_0^{i_{n+1}}$, with $1\leq i_{n+1}\leq p$, we get

    \begin{equation}
         z^{i_1}(t,z_0^{j_1}) z^{i_2}(t,z_0)... z^{i_n}(t,z_0) z^{i_{n+1}}(t,z_0) = e^{th^k(z_0^{j})\frac {\partial}{\partial z^k_0}}(z_0^{i_1}z_0^{i_2}...z_0^{i_n})e^{th^k(z_0^{j_{n+1}})\frac {\partial}{\partial z^k_0}}z_0^{i_{n+1}}.
    \end{equation}
    Next, consider the sequences $a_n(t) = \frac{t^n}{n!}\big(h^a(z_0^{j})\frac {\partial}{\partial z^a_0}\big)^n (z_0^{i_1}z_0^{i_2}...z_0^{i_n})$ and $b_n(t) = \frac{t^n}{n!}\big(h^k(z_0^{j_{n+1}})\frac {\partial}{\partial z^k_0}\big)^n z_0^{i_{n+1}}$. The series with these general terms are convergent by the hypothesis of induction\footnote{Since statement (\ref{stment}) is true by hypothesis and all $ z^{i}(t,z_0^j)$ are convergent series, we know from the theory of power series that a finite product of convergent power series is also a convergent power series. Then $e^{th^k(z_0^{j})\frac {\partial}{\partial z^k_0}}(z_0^{i_1}z_0^{i_2}...z_0^{i_n}) = \sum_{n=0}^\infty a_n(t) = \sum_{n=0}^\infty\frac{t^n}{n!}\big(h^a(z_0^{j})\frac {\partial}{\partial z^a_0}\big)^n (z_0^{i_1}z_0^{i_2}...z_0^{i_n})$ is convergent.}, then we can use the same theorem as before and get 
    
    \begin{align}
          \nonumber  z^{i_1}(t,z_0^{j_1}) z^{i_2}(t,z_0^{j_2})... z^{i_n}(t,z_0^{j_n}) z^{i_{n+1}}(t,z_0^{j_{n+1}}) &= e^{th^k(z_0^{j})\frac {\partial}{\partial z^k_0}}(z_0^{i_1}z_0^{i_2}...z_0^{i_n})e^{th^k(z_0^{j_{n+1}})\frac {\partial}{\partial z^k_0}}z_0^{i_{n+1}}\\
          \nonumber&= \bigg(\sum_{n=0}^\infty a_n(t)\bigg)\bigg(\sum_{n=0}^\infty b_n(t)\bigg) = \sum_{n=0}^\infty \sum_{k=0}^n a_k(t)b_{n-k}(t)\\ 
          \nonumber&= \sum_{n=0}^\infty\sum_{k=0}^n\frac{t^n}{k!(n-k)!}\bigg(h^a(z_0^{j})\frac{\partial}{\partial z^a_0}\bigg)^k (z_0^{i_1}z_0^{i_2}...z_0^{i_n})\bigg(h^b(z_0^{j_{n+1}})\frac {\partial}{\partial z^b_0}\bigg)^{n-k}z_0^{i_{n+1}}\\
          \nonumber&=\sum_{n=0}^\infty\frac{t^n}{n!}\bigg(h^k(z_0^j)\frac {\partial}{\partial z^k_0}\bigg)^{n}(z_0^{i_1}z_0^{i_2}...z_0^{i_n}z_0^{i_{n+1}})\\
          &=e^{th^k(z_0^j)\frac {\partial}{\partial z^k_0}}(z_0^{i_1}z_0^{i_2}...z_0^{i_n}z_0^{i_{n+1}}),
    \end{align}
    where the general Leibniz rule (\ref{derivadan}) was used. By the above relationship, we can conclude that the statement (\ref{stment}) is true for $n+1$. Then, by weak induction, it is true for every natural.
    
\end{proof}
Now, we can move on to the proof of the propriety (\ref{prop1}):

\begin{proposition}\label{grouplaw} Consider an analytic function $f:V\to \mathbb{R}$. Then we have:

\begin{equation}
    e^{th^k(z_0^j)\frac {\partial}{\partial z^k_0}}f(z_0^i) = f(e^{th^k(z_0^j)\frac {\partial}{\partial z^k_0}}z_0^i) = f( z^i(t,z_0^j)).
\end{equation}

\end{proposition}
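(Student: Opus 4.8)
The plan is to reduce the statement about an arbitrary analytic $f$ to the already-established Lemma~\ref{t3}, which handles monomials in the $z_0^i$. First I would fix a point $z_0 = (z_0^1,\dots,z_0^p)\in V$ and expand $f$ in its Taylor series about that point: since $f$ is analytic, on some polydisc we have $f(w) = \sum_{\alpha} c_\alpha (w - z_0)^\alpha$, a convergent power series whose terms are (linear combinations of) monomials in the coordinates $w^i$. Applying $e^{th^k(z_0^j)\partial/\partial z_0^k}$ termwise — which is legitimate because, as in the proof of Lemma~\ref{t2}, the operator acts as $\sum_n \frac{t^n}{n!}(h^k\partial_k)^n$ and the composition of this with the convergent Taylor expansion of $f$ converges in a neighborhood of $t=0$ by Cauchy--Kovalevskaya applied to $f(z(t))$ — I would push the exponential operator through the sum and through each monomial using Lemma~\ref{t3}. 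That lemma converts $e^{th^k\partial_k}(z_0^{i_1}\cdots z_0^{i_n})$ into $z^{i_1}(t,z_0)\cdots z^{i_n}(t,z_0)$, i.e. each coordinate $z_0^i$ simply gets replaced by $z^i(t,z_0) = e^{th^k\partial_k}z_0^i$.

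The second step is to recognize the resulting series. After the termwise replacement, the series reads $\sum_\alpha c_\alpha\,(z(t,z_0) - z_0)^\alpha$, which is exactly the Taylor expansion of $f$ evaluated at the point $z(t,z_0)$, hence equals $f(z^i(t,z_0))$. Combined with the first step this gives $e^{th^k\partial_k}f(z_0^i) = f(z^i(t,z_0)) = f(e^{th^k\partial_k}z_0^i)$, which is the claim. An alternative, perhaps cleaner, route that avoids multi-indices altogether: prove by induction on $n$ that $(h^k\partial_k)^n f(z_0^i)\big|$ equals $\frac{d^n}{dt^n} f(z(t,z_0))\big|_{t=0}$ — the base case is trivial and the inductive step is the chain rule, exactly as in the proof of Proposition~\ref{prop11} with $f(z)$ in place of $z^i$ — and then sum the Taylor series of $t\mapsto f(z(t,z_0))$, which converges near $t=0$ since $f\circ z$ is analytic. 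This is the argument I would actually write, since it sidesteps the bookkeeping of Lemma~\ref{t3}.

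The main obstacle is the interchange of the infinite operator-exponential with the infinite Taylor sum defining $f$: one must justify that $e^{th^k\partial_k}\big(\sum_\alpha c_\alpha(z_0-a)^\alpha\big) = \sum_\alpha c_\alpha\, e^{th^k\partial_k}(z_0-a)^\alpha$, i.e. a double series can be rearranged. In the induction-based route this difficulty is repackaged (and largely dissolved) as the single analyticity statement that $t\mapsto f(z(t,z_0))$ has a convergent Taylor series near $t=0$, which follows because $z(t)$ is analytic in $t$ (Cauchy--Kovalevskaya) and $f$ is analytic on $V$, so the composition is analytic; the only care needed is that $z(t,z_0)$ stays in the domain of convergence of $f$ for $|t|$ small, which holds by continuity since $z(0,z_0)=z_0\in V$. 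I would therefore present the induction/chain-rule argument as the proof and, if desired, remark that the monomial-by-monomial computation via Lemma~\ref{t3} gives the same result.
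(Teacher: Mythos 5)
Your proposal is correct, and the argument you say you would actually write — the induction/chain-rule route — is genuinely different from the paper's. The paper proves the proposition by Taylor-expanding $f$ about a fixed center, pushing $e^{th^k\partial/\partial z_0^k}$ through the series term by term, converting each monomial $z_0^{i_1}\cdots z_0^{i_n}$ into $z^{i_1}(t,z_0)\cdots z^{i_n}(t,z_0)$ via Lemma~\ref{t3}, and then recognizing the resulting series as $f(z(t,z_0))$ (closing with the same domain/continuity remark you make). Your first sketch is essentially that proof — though note that you must expand $f$ about a \emph{fixed} center $a$ with constant coefficients, as you do in your final paragraph with $(z_0-a)^\alpha$, not ``about the point $z_0$'' itself, since otherwise the operator would also act on the coefficients and on the center. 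Your preferred route instead shows by induction that $\frac{\dd^n}{\dd t^n}f(z(t,z_0))\big|_{t=0}=\big(h^k(z_0^j)\frac{\partial}{\partial z_0^k}\big)^n f(z_0^i)$, exactly mirroring the paper's proof of Proposition~\ref{prop11} with $f$ in place of the coordinate function, and then sums the Taylor series of the analytic composition $t\mapsto f(z(t,z_0))$. This buys you something real: it bypasses Lemmas~\ref{t2} and~\ref{t3} entirely and dissolves the termwise-interchange of the operator exponential with the infinite Taylor sum — the one step the paper's proof leaves essentially unjustified — into the single standard fact that a composition of analytic maps is analytic. What the paper's route buys in exchange is the explicit algebra-homomorphism (substitution) property of $e^{th^k\partial/\partial z_0^k}$ on products, which is of independent interest but is not logically needed for this proposition. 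Your identification of the interchange of double series as the main obstacle, and of how each route handles it, is accurate.
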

\begin{proof}

Without loss of generality, consider $0\in V$. Since function $f$ is analytic in $V$, then there is $r>0$ such that for all $(z_0^1, z_0^2, ..., z_0^p)\in B_r(0)$ (open ball with radius $r$ centered in $0\in \mathbb{R}^p$), the Taylor series of $f$ about $0$ is given by\footnote{In this formula and the next we denote $f(z_0^i) = f(z_0)$.} 

\begin{equation}
    f(z_0) = c_0 + \sum_{i=1}^pc_i z_0^i + \sum_{i, j= 1}^pc_{i,j}z_0^i z_0^j + \sum_{i, j, k=1}^pc_{i, j, k}z_0^iz_0^jz_0^k + ...\ \ ,
\end{equation}
converges absolutely and uniformly. The numbers $c_0$, $c_i$, $c_{i,j}$ … are the coefficients of the Taylor series. Applying the operator $e^{th^k(z_0^j)\frac {\partial}{\partial z^k_0}}$ in both sides and using the previous lemma, we get

\begin{align}
     e^{th^k(z_0)\frac {\partial}{\partial z^k_0}}f(z_0) &= c_0 + \sum_{i=1}^pc_i{z}^i(t, z_0) + \sum_{i, j=1}^pc_{i,j}{z}^i(t, z_0){z}^j(t, z_0) + \sum_{i, j, k=1}^pc_{i,j, k}{z}^i(t, z_0){z}^j(t, z_0){z}^k(t, z_0) + ... = f({z}(t, z_0)).
\end{align}
The last equality in the above equation needs some explanation. The series ${z}^i(t, z_0^j)$ are analytics functions (thus continuous) defined in a neighborhood of $0\in\mathbb{R}$ represented by an interval $(-r', r')$, with $r'>0$. Then, the image of $ z(t, z_0^j)$ by $(-r', r')$: $ z((-r', r'), z_0^j))$ is a convex subset of $\mathbb{R}^p$. As $(z_0^1, z_0^2, ..., z_0^p)$ is an interior point in $B_r(0)$, then the intersection $ z((-r', r'), z_0^j))\cap B_r(0)$ is a non-empty subset of $\mathbb{R}^p$. Furthermore, the last equality is valid and it happens for any $t$ such that $z(t, z_0^j)\in  z((-r', r', z_0^j))\cap B_r(0)$.

\end{proof}

The latter result has an interesting application. By definition, the function $h^{i}(z_0^j) = \big(h^k(z_0^j)\frac{\partial}{\partial z_0^k}\big)z_0^i$ is analytic in $V$. The operator $e^{th^k(z_0^j)\frac{\partial}{\partial z_0^k}}$ acting in that function results in:
\begin{align}
    e^{th^k(z_0^j)\frac{\partial}{\partial z_0^k}}h^{i}(z_0^j) = h^{i}(z(t, z_0^j)) = \dot{z}(t, z_0^j),
\end{align}
since $z^i(t, z_0^j)$ is the general solution to the differential equation $\dot z^i = h^i(z^j)$. 

We recall that the function $F(z^1, z^2, \ldots , z^p)$ is an integral of motion of the system (\ref{sistint}), if for any solution $z^i(t)$ we have 
\begin{eqnarray}\label{aa1}
\frac{\dd}{\dd t}F(z^i(t))=0. 
\end{eqnarray}
There is a relationship among the integrals of motion and the kernel of the operator $h^k(z_0^j)\frac{\partial}{\partial z_0^k}$. \par

\begin{proposition}
    The following two conditions turn out to be equivalent: \par
    \noindent  1. $F(z^i)$ is an integral of motion. \par 

\noindent  2. $F(z^i_0)$ lies in the kernel of the operator $h^k(z_0^j)\frac{\partial}{\partial z_0^k}$, that is  $h^k(z_0^j)\frac{\partial}{\partial z_0^k} F(z^i_0)=0$.
\end{proposition}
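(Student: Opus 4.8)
The plan is to prove the equivalence by passing through the operator identities already established in the appendix, specifically Proposition~\ref{grouplaw} and the computation following it. The key observation is that applying $e^{th^k(z_0^j)\frac{\partial}{\partial z_0^k}}$ to an analytic function $F$ gives $F(z^i(t,z_0^j))$, so differentiating this in $t$ relates the time-derivative of $F$ along a solution to the action of the operator $h^k\frac{\partial}{\partial z_0^k}$.

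First I would assume condition 2, i.e. $h^k(z_0^j)\frac{\partial}{\partial z_0^k}F(z_0^i)=0$. Then for any $n\geq 1$ we get $\big(h^k(z_0^j)\frac{\partial}{\partial z_0^k}\big)^n F(z_0^i)=0$, so the series $e^{th^k(z_0^j)\frac{\partial}{\partial z_0^k}}F(z_0^i)$ collapses to its $n=0$ term, namely $F(z_0^i)$. By Proposition~\ref{grouplaw}, $e^{th^k(z_0^j)\frac{\partial}{\partial z_0^k}}F(z_0^i)=F(z^i(t,z_0^j))$, hence $F(z^i(t,z_0^j))=F(z_0^i)$ is constant in $t$, which is exactly condition~1 (since by Proposition~\ref{prop11} every solution is of this form for a suitable choice of initial data $z_0^j$).

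Conversely, assuming condition 1, I would differentiate the identity $F(z^i(t,z_0^j))=e^{th^k(z_0^j)\frac{\partial}{\partial z_0^k}}F(z_0^i)$ with respect to $t$, using property (\ref{prop2}) (applied to $F$ in place of $z^i$, which is legitimate by the same argument as in the appendix: $\frac{\dd}{\dd t}e^{th^k\frac{\partial}{\partial z_0^k}}F(z_0^i)=e^{th^k\frac{\partial}{\partial z_0^k}}\big[h^k(z_0^j)\frac{\partial}{\partial z_0^k}F(z_0^i)\big]$). The left side is $\frac{\dd}{\dd t}F(z^i(t))=0$ by hypothesis, so $e^{th^k\frac{\partial}{\partial z_0^k}}\big[h^k(z_0^j)\frac{\partial}{\partial z_0^k}F(z_0^i)\big]=0$ for all $t$ in a neighborhood of $0$. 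Evaluating at $t=0$, the exponential operator reduces to the identity, leaving $h^k(z_0^j)\frac{\partial}{\partial z_0^k}F(z_0^i)=0$, which is condition~2.

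The main obstacle is a bookkeeping one rather than a conceptual one: one must be careful that the initial data $z_0^j$ in condition~1 ranges over all admissible points of $V$, so that ``$F$ constant along every solution'' genuinely forces the function $h^k\frac{\partial}{\partial z_0^k}F$ to vanish identically on $V$, not just at one point; this is handled by noting that through each point of $V$ there passes a solution with that point as initial datum. A secondary technical point is justifying termwise differentiation of the power series in $t$, but this is exactly the content of property (\ref{prop2}) and the convergence guaranteed by the Cauchy--Kovalevskaya theorem, both already invoked in the excerpt, so it may simply be cited.
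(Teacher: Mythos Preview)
Your proof is correct and follows essentially the same approach as the paper. The direction $2\Rightarrow 1$ is handled identically (the series collapses by Proposition~\ref{grouplaw}); for $1\Rightarrow 2$ the paper is marginally more direct, using only the chain rule $\dot F(z(t))=\dot z^k\,\partial_{z^k}F\big|_{z=z(t)}=h^k\partial_{z^k}F\big|_{z=z(t)}$ and then reading off the conclusion at $t=0$, whereas you route the same computation through the exponential operator and property~(\ref{prop2}) before evaluating at $t=0$---a cosmetic difference, not a substantive one.
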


\begin{proof}
    The condition $1$ implies the condition $2$. Indeed, being $F(z^i)$ an integral of motion, then 

    \begin{align}
        0 = \dot{F}(z^i(t, z_0^j)) = \dot{z}^k(t, z_0^j)\frac{\partial F(z^i)}{\partial z^k}\bigg|_{z = z(t, z_0)} = \bigg(h^k(z^j)\frac{\partial F(z^i)}{\partial z^k}\bigg)\bigg|_{z = z(t, z_0)},
    \end{align}
    for any $t$. So, we have $h^k(z^j)\frac{\partial F(z^i)}{\partial z^k} = 0$, concluding that $F(z^j)$ belongs to the kernel of $h^k(z^j)\frac{\partial}{\partial z^k}$.

    The condition $2$ implies the condition $1$. Indeed, consider a function $F(z_0^i)$ such that $h^k(z_0^j)\frac{\partial}{\partial z_0^k} F(z^i_0)=0$. By direct application and with the proposition \ref{grouplaw}, we have
    \begin{equation}
    F(z^i(t, z_0^j)) =  e^{th^k(z_0^j)\frac{\partial}{\partial z_0^k}}F(z_0^i) = F(z_0^i).
    \end{equation}
    The derivative relative to $t$ of the above equation provides us with $\dot{F}(z^i(t, z_0^j)) = \frac{\dd}{\dd t}F(z_0^i) = 0$. So, it is an integral of motion.
\end{proof}
In particular, if $F(z^i)$ is an integral motion, given a differentiable function $G:\mathbb{R}\to \mathbb{R}$ and a solution $z^i(t)$, we have

\begin{equation}
    \frac{\dd}{\dd t}G(F(z^i(t))) = \frac{\dd G(\alpha)}{\dd \alpha}\bigg|_{\alpha = F(z^i(t))}\frac{\dd F(z^i(t))}{\dd t} = 0.
\end{equation}
So, the above lemma results in $h^k(z_0^j)\frac{\partial}{\partial z_0^k} G(F(z^i_0))=0$.

\end{document}